\def\beq{\begin{equation}}
\def\eeq{\end{equation}}
\def\bea{\begin{eqnarray}}
\def\eea{\end{eqnarray}}
\def\beann{\begin{eqnarray*}}
\def\eeann{\end{eqnarray*}}
\let\a=\alpha  \let\g=\gamma \let\de=\delta
\let\e=\varepsilon   \let\th=\theta
  \let\la=\lambda \let\m=\mu
\let\n=\nu  \let\p=\pi \let\r=\rho \let\s=\sigma
\let\om=\omega
  \let\D=\Delta
\let\qd=\quad  
\def\epp{\, .}
\def\epc{\, ,}
\theoremstyle{plain}
\newtheorem{theorem}{Theorem}
\newtheorem*{theorem*}{Theorem}
\newtheorem{lemma}{Lemma}
\newtheorem*{lemma*}{Lemma}
\newtheorem*{corollary*}{Corollary}
\newtheorem*{conjecture*}{Conjecture}
\theoremstyle{definition}
\newtheorem*{remark}{Remark}
\newtheorem*{question*}{Question}
\renewcommand{\labelenumi}{(\roman{enumi})}
\def\2{\frac{1}{2}} \def\4{\frac{1}{4}}
\def\6{\partial}
\def\+{\dagger}
\def\<{\langle} \def\>{\rangle}
\let\then=\Rightarrow
\let\nodoti\i
\def\i{{\rm i}}
\def\rd{{\rm d}}
\DeclareMathOperator{\re}{e}
\DeclareMathOperator{\sh}{sh}
\DeclareMathOperator{\ch}{ch}
\DeclareMathOperator{\cth}{cth}
\DeclareMathOperator{\End}{End}
\DeclareMathOperator{\id}{id}
\DeclareMathOperator{\res}{res}
\def\Re{{\rm Re\,}} \def\Im{{\rm Im\,}}
\begin{document}

\thispagestyle{empty}

\begin{center}

{\Large \bf Dressed energy of the XXZ chain in the complex plane}

\vspace{10mm}

{\large
Saskia Faulmann,$^\dagger$
Frank G\"{o}hmann$^\dagger$ and
Karol K. Kozlowski$^\ast$}\\[3.5ex]
$^\dagger$Fakult\"at f\"ur Mathematik und Naturwissenschaften,\\
Bergische Universit\"at Wuppertal,
42097 Wuppertal, Germany\\[1.0ex]
$^\ast$Univ Lyon, ENS de Lyon, Univ Claude Bernard,\\ CNRS,
Laboratoire de Physique, F-69342 Lyon, France

\vspace{45mm}


{\large {\bf Abstract}}

\end{center}

\begin{list}{}{\addtolength{\rightmargin}{9mm}
               \addtolength{\topsep}{-5mm}}
\item
We consider the dressed energy $\varepsilon$ of the
XXZ chain in the massless antiferromagnetic parameter
regime at $0 < \Delta < 1$ and at finite magnetic field.
This function is defined as a solution of a Fredholm
integral equation of the second kind. Conceived as a
real function over the real numbers it describes the
energy of particle-hole excitations over the ground
state at fixed magnetic field. The extension of the
dressed energy to the complex plane determines the
solutions to the Bethe Ansatz equations for the
eigenvalue problem of the quantum transfer matrix of
the model in the low-temperature limit. At low
temperatures the Bethe roots that parametrize the
dominant eigenvalue of the quantum transfer matrix
come close to the curve ${\rm Re}\, \varepsilon
(\lambda) = 0$. We describe this curve and give lower
bounds to the function ${\rm Re}\, \varepsilon$ in
regions of the complex plane, where it is positive.
\end{list}

\clearpage

\section{Introduction}
The XXZ chain \cite{Orbach58, Walker59,YaYa66a,YaYa66b,YaYa66c}
is an anisotropic deformation of the Heisenberg chain \cite{Bethe31}.
It is the prototypical example of a Yang-Baxter integrable model
which is solvable by means of the algebraic Bethe Ansatz \cite{STF79}.
The Hamiltonian of the model acts on the tensor product space
${\cal H}_L = \bigotimes_{j=1}^L V_j$, $V_j = {\mathbb C}^2$,
in which every factor is identified with a lattice site in a
1d crystal. Expressed in terms of the familiar Pauli matrices
$\s^\a \in \End {\mathbb C}^2$, $\a = x, y, z$, the Hamiltonian
takes the form
\begin{equation} \label{hxxz}
     H_L = J \sum_{j = 1}^L \Bigl\{ \s_{j-1}^x \s_j^x + \s_{j-1}^y \s_j^y
                 + \D \bigl( \s_{j-1}^z \s_j^z - 1 \bigr) \Bigr\}
		 - \frac{h}{2} \sum_{j=1}^L \s_j^z \epp
\end{equation}
The three real parameters involved in this definition are the
anisotropy $\D$, the exchange interaction $J > 0$, and the
strength $h > 0$ of an external magnetic field.

The functions that characterize the properties of Yang-Baxter
integrable quantum systems in the thermodynamic limit, $L
\rightarrow \infty$, at zero temperature are defined as
solutions of Fredholm integral equations of the second kind
with kernels of difference form. The kernel functions
$K$ are given by the derivatives of the bare two-particle
scattering phases~$\th$ as functions of a rapidity variable
$\la$. If $S(\la)$ is the two-particle scattering factor for
a given~$\la$, then $S(\la) = \re^{2 \p \i \th (\la)}$ and
$K(\la) = \th' (\la)$.

For the XXZ chain with anisotropy parameter $\D = \cos(\g)$ we
have
\begin{equation}
     S(\la) = \frac{\sh(\la - \i \g)}{\sh(\la + \i \g)} \epp
\end{equation}
Hence, the kernel function is
\begin{equation} \label{defk}
     K(\la|\g) = \frac{1}{2 \p \i}
        \bigl( \cth(\la - \i \g) - \cth(\la + \i \g)\bigr) \epp
\end{equation}
In the following we restrict ourselves to the so-called
repulsive critical regime $0 < \D < 1$ corresponding to
$\g \in (0, \p/2)$.

We consider the integral equation
\begin{equation} \label{fredint}
     f(\la|Q) = f_0 (\la) - \int_{- Q}^Q \rd \m \; K(\la - \m|\g) f(\m|Q) \epc
\end{equation}
where $f_0 \in C^0 \bigl([- Q, Q]\bigr)$ will be called the
driving term.  It is not difficult to establish the existence
and uniqueness of solutions of (\ref{fredint}) on $C^0
\bigl([- Q, Q]\bigr)$. It follows from the convergence of the
Neumann series of the corresponding integral operator. The
proof and some further implications will be recalled below.

Once $f(\cdot|Q) \in C^0 \bigl([-Q,Q]\bigr)$ is fixed, the
integral on the right hand side of (\ref{fredint}) defines
a holomorphic $\i \p$-periodic function on the domain
\begin{equation}
     \Upsilon_\g (Q) =
        \bigl\{z \in {\mathbb C} \big|
	       z \notin [-Q,Q] \pm \i \g \mod \i \p\bigr\} \epp 
\end{equation}
If $f_0$ is meromorphic and $\i \p$-periodic on $\Upsilon_\g (Q)$,
then the same is true for $f(\cdot|Q)$ due to~(\ref{fredint}).
Functions defined this way play an important role in the study
of correlation functions of the XXZ chain in the zero-temperature
limit (see e.g.\ \cite{KMT99b,KKMST11b,DGK13a}).

The purpose of this work is to gain a better understanding of one
such special function, the dressed energy, on $\Upsilon_\g (Q)$.
Consider (\ref{fredint}) with driving term
\begin{equation}
     \e_0 (\la) = h - 4\p J \sin (\g) K(\la|\g/2) \epp
\end{equation}
This function is even on ${\mathbb R}$ and monotonically
increasing on ${\mathbb R}_+$.
\begin{equation}
     \min_{\la \in {\mathbb R}} \e_0 (\la) = \e_0 (0)
        = h - 4\p J \sin (\g) K(0|\g/2) = h - 4 J (1 + \D) \epp
\end{equation}
The condition $\e_0 (0) = 0$ determines the `upper critical
field'
\begin{equation}
     h_c = 4J (1 + \D) \epp
\end{equation}
Since $\lim_{\la \rightarrow \infty} \e_0 (\la) = h$, the function
$\e_0$ has a unique positive zero $Q_0$ if and only if
\begin{equation}
     0 < h < h_c \epp
\end{equation}
The latter condition defines the `critical parameter regime'. The
solution $\e (\la|Q)$ of (\ref{fredint}) with driving term $\e_0 (\la)$
has the following properties.

\begin{theorem} \label{th:fpoints}
Existence and uniqueness of Fermi points \cite{DGK14b}. Let
$\g \in (0, \p/2)$ and
\begin{equation}
     \e_u (\la) = h - \frac{2 \p J \sin (\g)}{\g \ch(\p \la/\g)} \epp
\end{equation}
\begin{enumerate}
\item
The function $\e(\la|Q)$ is a smooth function of $(\la,Q)$ on
${\mathbb R} \times (0, \infty)$ that is even in $\la$.
\item
For $\la \in {\mathbb R}$ it has the lower and upper bounds
\begin{subequations}
\label{epslowup}
\begin{align}
     \e_0 (\la) < \e(\la|Q) & & \text{for $0 < Q \le Q_0$,} \\[1ex]
     \e(\la|Q) < \e_u (\la) & & \text{for all $Q \ge 0$.}
\end{align}
\end{subequations}
\item
For any $h \in (0, h_c)$ exists a unique solution $Q_F > 0$ of
the equation $\e (Q|Q) = 0$. $Q_F$ is called the Fermi
rapidity.
\item
The Fermi rapidity is bounded by
\begin{subequations}
\label{qfbounds}
\begin{equation}
     Q_F < Q_0
\end{equation}
and, if there is a $Q_u$ with $\e_u (Q_u) = 0$ ($\Leftrightarrow
h < 2 \p J \sin(\g)/\g$), by
\begin{equation}
     Q_u < Q_F \epp
\end{equation}
\end{subequations}
\item
The function $h: (0, h_c) \rightarrow {\mathbb R}_+$, $h \mapsto
Q_F$ is smooth and monotonically decreasing with
$\lim_{h \rightarrow 0} Q_F = \infty$ and $\lim_{h \rightarrow h_c}
Q_F = 0$.
\end{enumerate}
\end{theorem}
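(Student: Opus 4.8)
The plan is to treat $\varepsilon(\cdot|Q)$ as $(\one+\widehat{K}_Q)^{-1}\varepsilon_0$, where $\widehat{K}_Q$ is the integral operator with kernel $K(\lambda-\mu|\gamma)$ on $C^0([-Q,Q])$. Two structural inputs organise everything. The first is the uniform bound $\|\widehat{K}_Q\|\le\int_{\mathbb R}K(\mu|\gamma)\,\rd\mu=1-2\gamma/\pi<1$, which (beyond the existence and uniqueness already recalled) keeps the resolvent $\widehat{R}_Q:=\widehat{K}_Q(\one+\widehat{K}_Q)^{-1}$ and all of its derivatives in $(\lambda,Q)$ under control. The second, and the crux, is the sign-definiteness of the resolvent of this Lieb-type kernel: $R_Q(\lambda,\mu)\ge 0$ on $[-Q,Q]^2$, together with the analogous statement for its extension $\widetilde{R}_Q(\lambda,\mu):=K(\lambda-\mu|\gamma)-\int_{-Q}^Q K(\lambda-\nu|\gamma)R_Q(\nu,\mu)\,\rd\nu\ge 0$ for $\lambda\in\mathbb R$, $\mu\in[-Q,Q]$, which yields the representation $\varepsilon(\lambda|Q)=\varepsilon_0(\lambda)-\int_{-Q}^Q\widetilde{R}_Q(\lambda,\mu)\varepsilon_0(\mu)\,\rd\mu$ valid for all $\lambda$. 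I would obtain these sign statements from the ``positivity transfer'' identity $(\one+\widehat{K}_Q)^{-1}=(\one-\widehat{K}_Q^{2})^{-1}(\one-\widehat{K}_Q)$: since $\widehat{K}_Q^{2}$ again has a non-negative kernel of norm $<1$, $(\one-\widehat{K}_Q^{2})^{-1}$ is positivity preserving, so a dressed function $(\one+\widehat{K}_Q)^{-1}g$ is $\ge 0$ (and $>0$ if $g\not\equiv 0$) whenever $g-\widehat{K}_Q g\ge 0$; for the driving terms needed below this last condition reduces to pointwise convolution inequalities for $K(\cdot|\gamma)$ that one checks from the explicit, sign-definite Fourier transform $\widehat{K}(\omega|\gamma)=\sinh\!\bigl((\pi-2\gamma)\omega/2\bigr)/\sinh(\pi\omega/2)\in(0,1)$ together with the identity $(\one+\widehat{K})\bigl[\,1/(2\gamma\cosh(\pi\lambda/\gamma))\,\bigr]=K(\lambda|\gamma/2)$ coming from the same computation. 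Alternatively these positivity facts are simply quoted from \cite{DGK14b}.

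Granting this, part (i) is short. Evenness of $\varepsilon(\cdot|Q)$ is immediate from uniqueness plus the evenness of $K$ and $\varepsilon_0$. Smoothness in $\lambda$ is read off the integral equation, whose right-hand side is $C^\infty$ in $\lambda$ once $\varepsilon(\cdot|Q)$ is fixed. Smoothness in $Q$ comes from differentiating the integral equation: the boundary terms, using $\varepsilon(-Q|Q)=\varepsilon(Q|Q)$, give $(\one+\widehat{K}_Q)\,\partial_Q\varepsilon(\cdot|Q)=-\varepsilon(Q|Q)\bigl[K(\cdot-Q|\gamma)+K(\cdot+Q|\gamma)\bigr]$, again a solvable equation with data depending smoothly on $(\lambda,Q)$, so a bootstrap — or the implicit function theorem in $C^0$ — gives joint $C^\infty$ smoothness on $\mathbb R\times(0,\infty)$.

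For part (ii), the lower bound is the representation at work: since $\varepsilon_0\le 0$ on $[-Q_0,Q_0]\supseteq[-Q,Q]$ with strict inequality in the interior, $\widetilde{R}_Q\ge 0$ forces $\varepsilon(\lambda|Q)-\varepsilon_0(\lambda)=-\int_{-Q}^Q\widetilde{R}_Q(\lambda,\mu)\varepsilon_0(\mu)\,\rd\mu>0$ for all $\lambda\in\mathbb R$ whenever $0<Q\le Q_0$. For the upper bound I would first solve the $Q\to\infty$ convolution equation by Fourier transform, getting $\varepsilon_{\mathrm{lim}}(\lambda)=\tfrac{\pi h}{2(\pi-\gamma)}-\tfrac{2\pi J\sin\gamma}{\gamma\cosh(\pi\lambda/\gamma)}$, and note that $\varepsilon_u=\varepsilon_{\mathrm{lim}}+\tfrac{h(\pi-2\gamma)}{2(\pi-\gamma)}$ with a strictly positive shift (as $\gamma<\pi/2$). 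Splitting $\widehat{K}=\widehat{K}_Q+\widehat{K}_{>Q}$ (with $\widehat{K}_{>Q}g(\lambda)=\int_{|\mu|>Q}K(\lambda-\mu|\gamma)g(\mu)\,\rd\mu$) and subtracting the finite- and infinite-$Q$ equations, $\delta:=\varepsilon_u-\varepsilon(\cdot|Q)$ solves $(\one+\widehat{K}_Q)\delta=h(1-2\gamma/\pi)-\widehat{K}_{>Q}\varepsilon_u$ on $[-Q,Q]$ — the shift has been chosen precisely so that $\tfrac{h(\pi-2\gamma)}{2(\pi-\gamma)}\bigl(1+\widehat{K}(0|\gamma)\bigr)=h(1-2\gamma/\pi)$ — and the right-hand side is strictly positive for $Q>0$ because $\sup_{\mathbb R}\varepsilon_u=h$ and $\int_{|\mu|>Q}K(\lambda-\mu|\gamma)\,\rd\mu<1-2\gamma/\pi$; positivity transfer then yields $\delta>0$ (the case $Q=0$ is $\varepsilon(\cdot|0)=\varepsilon_0<\varepsilon_u$, which holds by continuity).

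Parts (iii)--(v) follow by one-dimensional calculus once (ii) is in hand. Put $\phi(Q):=\varepsilon(Q|Q)$: it is smooth on $(0,\infty)$, $\phi(0^+)=\varepsilon_0(0)=h-h_c<0$ for $h\in(0,h_c)$, while the lower bound at $\lambda=Q=Q_0$ gives $\phi(Q_0)>\varepsilon_0(Q_0)=0$, so $\phi$ has a zero in $(0,Q_0)$. At any zero $Q_F$ of $\phi$ the $\partial_Q$-term found in (i) vanishes, so $\phi'(Q_F)=\partial_\lambda\varepsilon(Q_F|Q_F)=\bigl[(\one+\widehat{K}_{Q_F})^{-1}\varepsilon_0'\bigr](Q_F)$; since $\varepsilon_0'>0$ on $(0,\infty)$ and $\varepsilon_0'$ is odd, reducing the problem to $[0,Q_F]$ (where the kernel becomes $K(\lambda-\mu|\gamma)-K(\lambda+\mu|\gamma)\ge 0$) and invoking positivity transfer gives $\phi'(Q_F)>0$. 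Hence every zero of $\phi$ is a transversal up-crossing, so there is exactly one, lying in $(0,Q_0)$: this is (iii) together with $Q_F<Q_0$. The bound $Q_u<Q_F$ comes from the upper bound at $\lambda=Q_F$, namely $0=\varepsilon(Q_F|Q_F)<\varepsilon_u(Q_F)$, since $\varepsilon_u$ is increasing on $\mathbb R_+$ with $\varepsilon_u(Q_u)=0$. Finally, $\phi_h(Q_F(h))=0$ with $\partial_Q\phi_h(Q_F)=\phi'(Q_F)>0$ makes $h\mapsto Q_F$ smooth by the implicit function theorem, with $\rd Q_F/\rd h=-\partial_h\phi_h(Q_F)/\partial_Q\phi_h(Q_F)<0$ because $\partial_h\phi_h(Q)=\bigl[(\one+\widehat{K}_Q)^{-1}1\bigr](Q)>0$ (here $1-\widehat{K}_Q 1\ge 1-(1-2\gamma/\pi)>0$, then positivity transfer); the limits follow by squeezing $Q_u<Q_F<Q_0$, using $\cosh(\pi Q_u/\gamma)=2\pi J\sin\gamma/(\gamma h)\to\infty$ as $h\to 0$ and $Q_0\to 0$ as $h\to h_c$ (from $\varepsilon_0(Q_0)=0$ and $K(\cdot|\gamma/2)$ strictly decreasing on $\mathbb R_+$). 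The one genuinely hard step is the positivity input of the first paragraph — the non-negativity of $R_Q$ and $\widetilde{R}_Q$, equivalently the sign-definite convolution inequalities behind ``positivity transfer'' — which is where the special structure of the XXZ kernel is indispensable; everything downstream is bookkeeping around it.
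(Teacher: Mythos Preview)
Your overall architecture is close to the paper's, and parts (i), the lower bound in (ii), (iv), and (v) are essentially the same arguments the paper gives. The genuine issue is how you handle positivity in the upper bound of (ii) and in (iii).

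Your ``positivity transfer'' principle is correctly stated: $(\one+\widehat K_Q)^{-1}g\ge 0$ is guaranteed once $(\one-\widehat K_Q)g\ge 0$, not merely once $g\ge 0$. In (v) you actually check this for $g\equiv 1$. But in the upper bound of (ii) you only show that $g:=h(1-2\g/\p)-\widehat K_{>Q}\e_u>0$ and then invoke positivity transfer; the required inequality $(\one-\widehat K_Q)g\ge 0$ is neither proved nor obvious. Likewise in (iii), after reducing to the half-interval with effective kernel $K(\la-\m|\g)-K(\la+\m|\g)\ge 0$, you would need $\e_0'-\widehat K_Q\e_0'\ge 0$ on $[0,Q_F]$, and the vague appeal to ``pointwise convolution inequalities'' checked ``from the explicit, sign-definite Fourier transform'' does not deliver this pointwise statement (positivity of $1-\widehat K(k)$ in Fourier space does not give pointwise positivity of $(\one-\widehat K_Q)\e_0'$). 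Since $(\one+\widehat K_Q)^{-1}=\one-\widehat R_Q$ with $R_Q\ge 0$, this operator is \emph{not} positivity preserving, so $g>0$ alone really is insufficient.

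The paper circumvents exactly this by passing to the \emph{complementary} equation on $\mathbb R\setminus[-Q,Q]$ with kernel $R(\cdot|\g)$, whose Neumann series has only positive terms. For the upper bound in (ii) it splits $\e=hZ-4\p J\sin(\g)\,\r$, bounding $Z<1$ from $Z=1-\int R_Q<1$ and $\r>\r_\infty$ from the complementary representation $\r=\r_\infty+\int_{\mathbb R\setminus[-Q,Q]}R_Q\,\r_\infty$ with $R_Q,\r_\infty>0$, which immediately gives $\e<\e_u$. For (iii) it differentiates the complementary equation to get
\[
   \e'(\la|Q_F)=\e_\infty'(\la)+\int_{Q_F}^\infty\bigl(R_{Q_F}(\la,\m)-R_{Q_F}(\la,-\m)\bigr)\,\e_\infty'(\m)\,\rd\m,
\]
where $\e_\infty'>0$ on $\mathbb R_+$ and $R_{Q_F}(\la,\m)-R_{Q_F}(\la,-\m)>0$ for $\la,\m>0$ follows from iterating the complementary resolvent identity with positive driving term $R(\la-\m|\g)-R(\la+\m|\g)>0$. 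This makes $\ph'(Q_F)=\e'(Q_F|Q_F)>0$ without any unverified convolution inequality. If you want to keep your direct approach, you must either prove the missing $(\one-\widehat K_Q)g\ge 0$ inequalities explicitly, or import the complementary-equation machinery (equivalently, the second resolvent representation $f=f_\infty+\int_{\mathbb R\setminus[-Q,Q]}R_Q f_\infty$) that the paper uses.
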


\begin{remark}
The proof of this theorem given in \cite{DGK14b} is only valid
for $h < 2 \p J \sin(\g)/\g$, which is the condition for $Q_u$ to
exist. But it can be readily extended to the whole interval
$(0,h_c)$ (see below).
\end{remark}

We define the dressed energy by
\begin{equation} \label{defdressede}
     \e (\la) = \e (\la|Q_F) \epp
\end{equation}
A dressed energy function was introduced in the context of the
Bose gas with delta function interaction in \cite{YaYa69}. The
dressed energy (\ref{defdressede}) of the XXZ chain in the critical
regime first appeared \cite{TaSu72} in the low temperature limit
of the TBA equations that fix the thermodynamic properties of the
XXZ chain.

The dressed energy is a meromorphic $\i \p$-periodic function
on $\Upsilon_\g (Q_F)$ by construction. Alternatively, we may interpret
it as a function on the cylinder with cuts
\begin{equation}
     S_\g (Q_F) = \Upsilon_\g (Q_F) \cap
        \bigl\{z \in {\mathbb C} \big| - \p/2 \le \Im z < \p/2\bigr\} \epp
\end{equation}
By the implicit function theorem the equation
\begin{equation} \label{reepszero}
     \Re \e (\la) = 0
\end{equation}
determines a smooth curve on $S_\g (Q_F)$. This curve and the
functions $\Re \e$ and $\Im \e$ are further characterized by
the following theorem.
\begin{theorem} \label{th:main}
Dressed energy in the complex plane.
Let $\g \in (0,\p/2)$ and $\e$ be as in (\ref{defdressede}).
\begin{enumerate}
\item
For all $\la \in S_\g (Q_F)$ with $\Re \la = x$ and $\Im \la = y$
the function $\la \mapsto \Re \e (\la)$ is even in $x$ and in $y$.
\item
Within the strip $0 \le y < \g/2$ the function $x \mapsto
\Re \e (x + \i y)$ is monotonically increasing on ${\mathbb R}_+$
and, for every $y$, has a single simple zero $x (y)$.
\item
This determines a smooth function $x(y)$ on $(0,\g/2)$ which
behaves at the boundaries as $x(0) = Q_F$ and
\begin{equation} \label{leadx}
     x(y) \sim \sqrt{\frac{2 J \sin(\g)}{c} \biggl(\frac{\g}{2} - y\biggr)}
\end{equation}
with
\begin{equation}
     c = \frac{1}{1 - \frac{\g}{\p}}
         \biggl\{\frac{h}{2} + \int_{Q_F}^\infty \rd \m \: 
	         K \biggl(\frac{\m}{1 - \frac{\g}{\p}}\Big|
		          \frac{\g/2}{1 - \frac{\g}{\p}}\biggr)
			  \e(\m) \biggr\} > 0
\end{equation}
for $y \rightarrow (\g/2)_-$.
\item
Within the strip $|\Im \la| < \g/2$ the dressed energy is subject to
the bounds
\begin{equation}
     \Re \e_0 (\la) < \Re \e (\la) < \Re \e_u (\la) \epp
\end{equation}
\item
$\Re \e (\la) > 0$ for all $\la \in S_\g (Q_F)$ with $|\Im \la| > \g/2$,
and we have the lower bounds
\begin{subequations}
\label{reepsgreater}
\begin{align}
     & \Re \e(\la) > h & & \text{if} \qd
                  \frac{\p}{2} - \2 \biggl(\frac{\p}{2} - \g\biggr)
		  < y < \frac{\p}{2} \epc \label{reeps4} \\[1ex]
     & \Re \e(\la) > \frac{h}{2} & & \text{if} \qd
		  \g < y < \frac{\p}{2} - \2 \biggl(\frac{\p}{2} - \g\biggr)
		  \epc \label{reeps3} \\[1ex]
     & \Re \e(\la) > \min \biggl\{\frac{h}{2}, \frac{h \g}{\p - \g}\biggr\} & & \text{if} \qd
		  \frac{\g}{2} < y < \g \epp \label{reeps2}
\end{align}
\end{subequations}
\item
For all $\la \in S_\g (Q_F)$ with $\Re \la = x$ and $\Im \la = y$
the function $\la \mapsto \Im \e (\la)$ is odd in $x$ and in $y$.
\item
$\Im \e$ is monotonically increasing along the curve $x(y)$,
\begin{equation}
     \frac{\rd \, \Im \e (x(y) + \i y)}{\rd y} > 0 \epc
\end{equation}
$\Im \e(x(0)) = 0$ and
\begin{equation} \label{leadim}
     \Im \e(x(y) + \i y) \sim \sqrt{\frac{2 J \sin(\g) c}{\g/2 - y}}
\end{equation}
for $y \rightarrow (\g/2)_-$.
\end{enumerate}
\end{theorem}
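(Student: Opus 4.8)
Three facts drive everything. First, $K(\la|\g)=\sin 2\g/\bigl(2\p(\sin^2\g+\sh^2\la)\bigr)$, so for complex $\la$ the real part $\Re K(\la|\g)$ has the sign of $\Re(\sin^2\g+\sh^2\la)=\2\bigl(\ch(2\Re\la)\cos(2\Im\la)-\cos 2\g\bigr)$, which is strictly positive whenever $|\Im\la|\le\g/2$ (then $2|\Im\la|\le\g<2\g<\p$). Second, the integral equation (\ref{fredint}) for $\e(\cdot|Q_F)$ inherits from $\e_0$ the symmetries $\e(-\la)=\e(\la)$ and $\overline{\e(\bar\la)}=\e(\la)$ on $S_\g(Q_F)$, because $\e_0$ is even and real on $\mathbb R$ and $K(\la|\g)=K(-\la|\g)=\overline{K(\bar\la|\g)}$; these two give $\e(x-\i y)=\overline{\e(x+\i y)}=\overline{\e(-x+\i y)}$, hence (i) and (vi). Third, $\e_0$ — hence $\e$ — has simple poles at $\pm\i\g/2\bmod\i\p$ with residues $\pm 2\i J\sin\g$ (from the pole of $K(\cdot|\g/2)$), while the convolution term in (\ref{fredint}) is regular there. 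In addition, $\e(\pm Q_F|Q_F)=0$ allows an integration by parts in (\ref{fredint}) showing that $\e'$ solves the same equation with driving term $\e_0'$; since $\e_0'$ is odd and positive on $(0,\infty)$, positivity of the resolvent of $\hat K_{Q_F}$ (recalled in the text) forces $\e'>0$ on $(0,\infty)$, so $\e$ is strictly increasing on $\mathbb R_+$ and, vanishing at $Q_F$, is negative on $(-Q_F,Q_F)$.

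\textbf{Bounds (iv).} The lower bound is immediate: for $|\Im\la|<\g/2$, $\Re\e(\la)-\Re\e_0(\la)=-\int_{-Q_F}^{Q_F}\Re K(\la-\mu|\g)\,\e(\mu)\,\rd\mu>0$ by positivity of $\Re K$ on this strip and negativity of $\e$ on the integration range. For the upper bound, one checks by Fourier transform that $\e_u$ solves $\e_u(\la)=\e_0(\la)-\int_{\mathbb R}K(\la-\mu|\g)\bigl(\e_u(\mu)-h\bigr)\,\rd\mu$; in particular $\e_u$ has the same principal part $2\i J\sin\g/(\la-\i\g/2)$ as $\e$ at $\i\g/2$, so $\e_u-\e$ extends holomorphically to the closed strip $|\Im\la|\le\g/2$. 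Then $\Re(\e_u-\e)$ is harmonic there, equals $\e_u-\e>0$ on $\mathbb R$ (Theorem~\ref{th:fpoints}) and tends to $0$ as $|\Re\la|\to\infty$; combined with a boundary estimate showing $\Re(\e_u-\e)>0$ on the lines $\Im\la=\pm\g/2$ (from the two integral equations and the positivity of $\Re K$ there), the minimum principle gives $\Re\e<\Re\e_u$ on the strip.

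\textbf{Monotonicity, the curve, asymptotics (ii), (iii), (vii).} Since $\e$ is holomorphic on $\{|\Im\la|<\g/2\}$, $\6_x\Re\e=\Re\e'$ is harmonic there; it vanishes on $\{\Re\la=0\}$ (the symmetries force $\e'(\i y)$ purely imaginary), tends to $0$ at $\infty$, and on $\{|\Im\la|=\g/2\}$ one shows $\Re\e'\ge0$ for $\Re\la>0$ (using $\Re\e_0'>0$ there and the decay of $\Re K$ in the real direction), so by the minimum principle $\6_x\Re\e(x+\i y)>0$ for $x>0$, $0\le y<\g/2$. With $\Re\e(x+\i y)\to h>0$ as $x\to\infty$ and $\e(\i y)=\Re\e(\i y)<0$ for $y\in[0,\g/2)$ — the latter because $y\mapsto\e(\i y)$ is strictly decreasing, which follows from $\Im\e'(\i y)>0$, itself deduced from the integral equation for $\e'$ and the bound $\e'<\e_0'$ on $(0,Q_F)$ — the intermediate value theorem produces the unique simple zero $x(y)$; $x(0)=Q_F$ is just $\e(Q_F)=0$. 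Its behaviour as $y\to(\g/2)_-$ comes from the pole expansion $\e(\la)=2\i J\sin\g/(\la-\i\g/2)+c+O(\la-\i\g/2)$ with $c$ real: imposing $\Re\e(x+\i y)=0$ to leading order gives $c=2J\sin\g(\g/2-y)/(x^2+(\g/2-y)^2)$, hence $x(y)^2\sim(2J\sin\g/c)(\g/2-y)$, which is (\ref{leadx}), and $\Im\e(x(y)+\i y)\sim 2J\sin\g\,x(y)/(x(y)^2+(\g/2-y)^2)\sim\sqrt{2J\sin\g\,c/(\g/2-y)}$, which is (\ref{leadim}). Identifying $c$ with the expression in (iii) requires an alternative representation of $\e$ near $\i\g/2$ obtained by deforming the contour in (\ref{fredint}), the rescaling of both rapidity and kernel parameter by $1/(1-\g/\p)$ being dictated by the local structure of the equation at the pole. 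Finally, differentiating the curve relation $\Re\e(x(y)+\i y)=0$ yields $\rd\,\Im\e(x(y)+\i y)/\rd y=|\e'(x(y)+\i y)|^2/\6_x\Re\e(x(y)+\i y)>0$ by (ii), and $\Im\e(x(0))=\Im\e(Q_F)=0$; this is (vii).

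\textbf{Positivity off the central strip (v).} For $|\Im\la|>\g/2$ one again writes $\Re\e(\la)=\Re\e_0(\la)-\int_{-Q_F}^{Q_F}\Re K(\la-\mu|\g)\,\e(\mu)\,\rd\mu$, but $\Re K(\la-\mu|\g)$ no longer has constant sign over the integration range; the analysis splits according to whether $\Im\la$ lies in $(\g/2,\g)$, $(\g,\p/4+\g/2)$ or $(\p/4+\g/2,\p/2)$ — the thresholds being the pole of $\e_0$, the pole of $K$, and the value at which the contour shift of $\int_{\mathbb R}K(\mu|\g)\,\rd\mu$ across $\pm\i\g$ changes the relevant sign. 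In each sub-strip one combines an elementary lower bound on $\Re\e_0$ with the estimate $\bigl|\int_{-Q_F}^{Q_F}\Re K(\la-\mu|\g)\,\e(\mu)\,\rd\mu\bigr|\le\bigl(\sup_{[-Q_F,Q_F]}|\e|\bigr)\int_{-Q_F}^{Q_F}|\Re K(\la-\mu|\g)|\,\rd\mu$, using $\sup_{[-Q_F,Q_F]}|\e|<4J(1+\D)-h$ (from $\e_0<\e<0$) and $\int_{\mathbb R}K(\mu|\g)\,\rd\mu=1-2\g/\p$; this gives $\Re\e>0$ and then the sharper bounds (\ref{reepsgreater}), the constants $h$, $h/2$ and $h\g/(\p-\g)$ being exactly what these estimates produce after the contour shifts. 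The main obstacle is this last part — the strip-by-strip sign bookkeeping and contour-shift estimates needed for the precise constants — together with the evaluation of $c$ in (iii) and the verification that $y\mapsto\e(\i y)$ has no zero in $[0,\g/2)$; the remaining steps are routine once the explicit form of $K$ and the symmetries are in place.
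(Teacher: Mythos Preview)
Your symmetry arguments for (i), (vi) and the Cauchy--Riemann computation for the monotonicity in (vii) are correct and match the paper. Your lower bound in (iv), using $\Re K(\la-\m|\g)>0$ for $|\Im\la|\le\g/2$ together with $\e<0$ on $[-Q_F,Q_F]$, is valid and in fact slightly simpler than the paper's route through the resolvent form with $R_{Q_F}$.

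There are, however, two genuine gaps.

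\emph{Part (ii).} Your minimum-principle argument for $\6_x\Re\e=\Re\e'$ on the half-strip requires control of $\Re\e'$ on the boundary lines $y=\pm\g/2$. But $\e'$ has a double pole at $\i\g/2$, so the boundary is singular; away from the pole you do not actually establish $\Re\e'\ge 0$ on these lines --- ``using $\Re\e_0'>0$ there and the decay of $\Re K$'' is not a proof, and for $\g>\p/3$ the kernel $\Re K(\cdot+\i\g/2|\g)=\tfrac12(K(\cdot|\g/2)+K(\cdot|3\g/2))$ involves $K(\cdot|3\g/2)$ with $3\g/2>\p/2$, whose sign and monotonicity go the wrong way. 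The paper avoids the boundary entirely: it first extends the inequalities $\Re R(\la|\g)>0$ and $\Re\bigl(R_{Q_F}(\la,\m)-R_{Q_F}(\la,-\m)\bigr)>0$ to all $\la$ with $|\Im\la|<\g/2$ by exploiting the convolution representation (\ref{rconv}), valid precisely on that open strip, and then reads off $\6_x\Re\e>0$ directly from the complementary equation (\ref{epsprime}) with driving term $\e_\infty'>0$. This gives the interior estimate without ever touching $y=\g/2$. (A smaller point: ``positivity of the resolvent'' gives $\e'<\e_0'$ on $\mathbb R_+$, not $\e'>0$; the latter comes from the complementary representation with $\e_\infty'$, not from $\e_0'$.)

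\emph{Part (v).} The crude estimate $\bigl|\int_{-Q_F}^{Q_F}\Re K(\la-\m|\g)\,\e(\m)\,\rd\m\bigr|\le\sup|\e|\cdot\int|\Re K|$ cannot produce the constants $h$, $h/2$, $h\g/(\p-\g)$. For instance, on the strip of (\ref{reeps4}) the function $\Re\e_0$ is not uniformly $>h$, so bounding the two pieces of (\ref{fredint}) separately cannot recover that constant. The paper proceeds quite differently: it deforms the integration contour in (\ref{fredint}) through the pole at $\i\g/2$ and the cut at $\i\g$ to derive a new representation
\[
\e\bigl(z+\tfrac{\i\p}{2}\bigr)=\frac{h}{1-\frac{\g}{\p}}
-\frac{1}{1-\frac{\g}{\p}}\int_{\mathbb R\setminus[-Q_F,Q_F]}
K\Bigl(\tfrac{z-w}{1-\g/\p}\Big|\tfrac{\p}{2}-\g'\Bigr)\,\e(w)\,\rd w,
\]
in which the integral runs only over the region where $\e>0$; one then shows (again via the resolvent trick) that $x\mapsto\Re\e(x+\i y)$ is \emph{monotonically decreasing} on $\mathbb R_+$ for $y$ in that strip, so the infimum is the limit $h$ at $x\to\infty$. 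For (\ref{reeps3}) the same representation is used with a case distinction on the sign of a shifted kernel, and for (\ref{reeps2}) the paper uses the complementary equation $\e=\e_\infty+\int_{\mathbb R\setminus[-Q_F,Q_F]}R(\la-\m|\g)\,\e(\m)\,\rd\m$ together with a splitting of $R$ across the line $\Im\la=\g/2$. The ``contour shifts of $\int_{\mathbb R}K$'' you allude to are not the same mechanism and will not yield these representations.
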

This theorem is our main result. It will be proven below.
Examples of the curve (\ref{reepszero}) for various sets of
parameters are shown in Fig.~\ref{fig:eps_curve_various_mag}. Our
\begin{figure}
\begin{center}
\includegraphics[width=.96\textwidth]{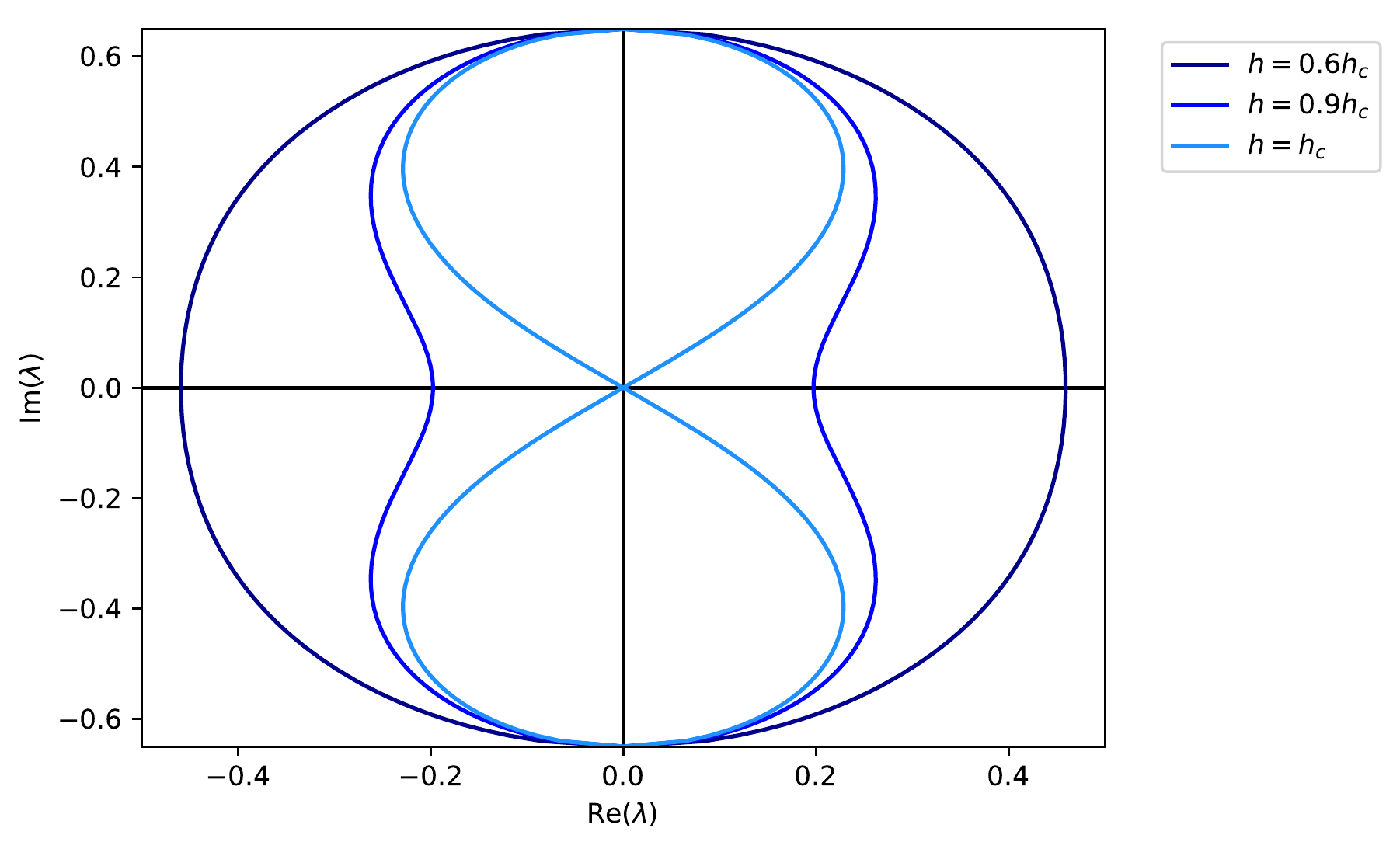}
\end{center}
\caption{\label{fig:eps_curve_various_mag}
The curves (\ref{reepszero}) for $J = 1$, $\g = 1.3$ and various
values of the magnetic field in units of $h_c = 5.07$. Loosely
speaking, Theorem~\ref{th:main} says that the figure describes
the generic situation: $\Re \e (\la) = 0$ is a simple closed
curve, situated entirely inside the strip $|\Im \la| < \frac{\g}{2}$,
symmetric with respect to the real and imaginary axis, such that
its positive part, $\Re \la > 0$, is the graph $\{x(y) + \i y|
y \in (- \g/2,\g/2)\}$ of a smooth function $x(y)$. At $h = h_c$
this graph develops a cusp which signals the transition to the
fully polarized massive regime.
}
\end{figure}
interest in the curve (\ref{reepszero}) and in the estimates
(\ref{reepsgreater}) comes from our work on thermal form factor
series for the correlation functions of the XXZ chain
(see e.g.\ \cite{DGK13a,DGKS16b,GKKKS17,BGKS21a}). The
derivation of the series requires knowledge of the full spectrum
of the quantum transfer matrix \cite{Suzuki85,SAW90,Kluemper92}
of the model. So far we have found a characterization of the
full spectrum only in the massive antiferromagnetic regime
($\D > 1$ and $0 < h < h_\ell$, where $h_\ell$ is a lower
critical field) in the low-temperature limit \cite{DGKS15b}.
This case is characterized by the absence of so-called string
excitations. Theorem~\ref{th:main} will be needed in order to
establish a similar behaviour in the massless regime. This
is what we would like to achieve in a subsequent paper. It
will be dealing with the low-temperature analysis of the
auxiliary functions and the spectrum of the quantum transfer
matrix of the XXZ chain for $- 1 < \D < 1$. The `critical part
of the spectrum', pertaining to excitations about the two Fermi
points $\pm Q_F$, was analyzed in \cite{DGK13a,DGK14a}. In our
forthcoming work we want to exclude the existence of strings
in the low-temperature limit. This will show that not only the
Bethe roots of the dominant state, but the Bethe roots belonging
to any Bethe eigenstate of the quantum transfer matrix come
close to the curve $\Re \e (\la) = 0$, when the temperature
goes to zero. The latter will then be a crucial input for the
further investigation of the thermal form factor series of the
two-point functions of the XXZ chain in the critical regime.

Our two theorems above are stated for a restricted parameter
regime, $\g \in (0,\p/2)$. This has several reasons. First of
all we wanted to avoid further case distinctions in order
to keep this work reasonably short and reader-friendly. In
fact a version of Theorem~\ref{th:fpoints} valid for $\g \in
(\p/2,\p)$ can be found in \cite{DGK14b}. As for the extension
of Theorem~\ref{th:main}, the techniques developed in
\cite{DGK14b} and below can be used. There are, however, certain
technical difficulties which come from the fact that for
$\g > 2\p /3$ the pole of the driving term $\e_0$ is beyond the
cuts caused by the poles of the kernel function inside the fundamental
cylinder, which are, in this case, located at $[-Q_F, Q_F] \pm
\i (\p - \g)$. These problems can be dealt with by a deformation
of the integration contour in the integral equation (\ref{fredint}),
but this is more naturally done in conjunction with the
low-$T$ analysis of the non-linear integral equations for the
auxiliary functions. We would also like to point out that
for some of the proofs of the properties of $\e$ for $\g \in
(0,\p/2)$ we will need to know the properties of the kernel
function for $\g \in (0,\p)$, which is why Lemma~\ref{lem:propk}
below is formulated for the extended parameter region.

\section{Preliminaries}
\subsection{Properties of the kernel function}
At several instances we will use Fourier transformation techniques.
Our convention for the Fourier transform of a function $f:
{\mathbb C} \rightarrow {\mathbb C}$ is
\begin{equation}
     {\cal F} [f] (k) = \int_{- \infty}^\infty \rd \la \: \re^{\i k \la} f(\la) \epp
\end{equation}

\begin{lemma} \label{lem:propk}
Properties of the kernel function.
\begin{enumerate}
\item
$K(\cdot|\g)$ defines a smooth even function on ${\mathbb R}$ which
is monotonously decreasing on ${\mathbb R}_+$ if $0 < \g < \p/2$
and monotonously increasing on ${\mathbb R}_+$ if $\p/2 < \g < \p$.
\item
$K(\la|\g) > 0$ for all $\la \in {\mathbb R}$ if $0 < \g < \p/2$, and
$K(\la|\g) < 0$ for all $\la \in {\mathbb R}$ if $\p/2 < \g < \p$.
\item
$K(\cdot|\g)$ is meromorphic on $S_\g (Q)$ with two simple poles
which are located at $\pm \i \g$ if $0 < \g < \p/2$ or at
$\pm \i (\p - \g)$ if $\p/2 < \g < \p$.
\item
For $x, y \in {\mathbb R}$
\begin{equation} \label{rekkk}
     \Re K(x + \i y|\g) = \2 \bigl(K(x|\g - y) + K(x|\g + y)\bigr) \epc
\end{equation}
implying that $\Re K(x + \i y|\g)$ is an even function of $x$ for
fixed $y$ and an even function of $y$ for fixed $x$.
\item
\begin{equation} \label{kfourier}
     {\cal F} [K(\cdot|\g)] (k) = \frac{\sh((\p/2 - \g)k)}{\sh(\p k/2)} \epp
\end{equation}
\end{enumerate}
\end{lemma}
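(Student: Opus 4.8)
The plan is to treat the five items essentially independently, since each is an elementary computation with the explicit function $K(\la|\g)$ in \eqref{defk}. First I would rewrite $K$ in a form convenient for real arguments: using $\cth(\la\pm\i\g)=\frac{\sh(2\la)\mp\i\sin(2\g)}{\ch(2\la)-\cos(2\g)}$, one obtains for $\la=x\in{\mathbb R}$ the real expression
\begin{equation}
     K(x|\g) = \frac{1}{2\p}\cdot\frac{\sin(2\g)}{\ch(2x)-\cos(2\g)} \epp
\end{equation}
From this closed form items (i) and (ii) are immediate: for $0<\g<\p/2$ we have $\sin(2\g)>0$ and $\ch(2x)-\cos(2\g)\ge 1-\cos(2\g)>0$, so $K(x|\g)>0$; its $x$-derivative has the sign of $-\sin(2\g)\sh(2x)$, hence $K$ is strictly decreasing on ${\mathbb R}_+$; and the sign flips wholesale for $\p/2<\g<\p$ where $\sin(2\g)<0$. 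Evenness is clear since $K$ depends on $x$ only through $\ch(2x)$.

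For item (iii) I would locate the poles of $\la\mapsto K(\la|\g)$ directly from \eqref{defk}: $\cth(\la-\i\g)$ has simple poles where $\la-\i\g\in\i\p{\mathbb Z}$, i.e.\ at $\la\in\i\g\mod\i\p$, and $\cth(\la+\i\g)$ at $\la\in-\i\g\mod\i\p$. Inside the fundamental strip $-\p/2\le\Im\la<\p/2$ this gives the two points $\pm\i\g$ when $\g\in(0,\p/2)$; when $\g\in(\p/2,\p)$ the representative of $\i\g\bmod\i\p$ lying in the strip is $\i(\g-\p)=-\i(\p-\g)$, and likewise $-\i\g\mapsto\i(\p-\g)$, giving $\pm\i(\p-\g)$. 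One checks the residues are nonzero and that there are no other singularities, so $K$ is meromorphic on $S_\g(Q)$ with exactly these two simple poles; away from the cuts this is consistent with the holomorphy statement made for the integral in \eqref{fredint}.

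Item (iv) is the key structural identity and the one place where a small computation is needed. Writing $\cth(x+\i y-\i\g)+\overline{\cth(x+\i y-\i\g)}=\cth(x+\i(y-\g))+\cth(x-\i(y-\g))$ and using $\frac{1}{2\p\i}(\cth(x-\i\alpha)-\cth(x+\i\alpha))=K(x|\alpha)$ with $\alpha=\g-y$ and $\alpha=\g+y$ respectively, one collects the four cotangent terms in $2\Re K(x+\i y|\g)$ into $K(x|\g-y)+K(x|\g+y)$; the oddness of $K(x|\alpha)$ in $\alpha$ (visible from the closed form above, as $\sin(2\alpha)$) makes the regrouping work and yields \eqref{rekkk}. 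The stated parity of $\Re K$ in $x$ and in $y$ then follows since $K(x|\cdot)$ is even in $x$ and $K(x|\g-y)+K(x|\g+y)$ is manifestly even under $y\mapsto-y$. Finally, for item (v) I would compute the Fourier transform \eqref{kfourier} either by residues applied to the real form of $K$, closing the contour in the upper half plane and summing the geometric series of residues at $x=\i(\p k_0)/\ldots$, or more cleanly by recalling the standard transform $\int\rd x\,\re^{\i k x}\,\frac{1}{2\p}\frac{\sin(2\g)}{\ch(2x)-\cos(2\g)}=\frac{\sh((\p/2-\g)k)}{\sh(\p k/2)}$ for $0<\g<\p$, which is a tabulated integral; the only care required is the range of validity and the even extension in $k$. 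The main obstacle, such as it is, is bookkeeping in item (iv)—keeping track of which shifted cotangent pairs with which—rather than any genuine difficulty; all five parts are routine once the closed real form of $K$ is in hand.
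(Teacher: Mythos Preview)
Your proposal is correct and follows essentially the same route as the paper: derive a closed real form of $K$ (the paper writes it as $\sin(2\g)/\bigl(2\p(\sh^2(\la)+\sin^2(\g))\bigr)$, equivalent to yours up to a factor-of-two slip in your normalisation), read off (i) and (ii) from it, obtain (iii) and (iv) directly from the cotangent definition, and compute (v) by residues using the $\i\p$-periodicity. The paper's proof is terser but the ideas are identical.
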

\begin{proof}
The kernel function $K(\la|\g)$ can be rewritten as
\begin{equation} \label{realk}
     K(\la|\g) = \frac{\sin(2\g)}{2 \p \bigl(\sh^2(\la) + \sin^2(\g)\bigr)}
\end{equation}
from which we can read of (i) and (ii). (iii) and (iv) are
direct consequences of the definition (\ref{defk}). The
calculation of the Fourier transform (v) is a standard
exercise using the $\i \p$-periodicity of $K(\cdot|\g)$ and
the residue theorem.
\end{proof}

\subsection{\boldmath The solvable case $Q = \infty$}
For $Q = \infty$ the integral equation (\ref{fredint})
can be solved by means of Fourier transformation and the
convolution theorem. This gives us explicit solutions for
various driving terms $f_0$. As we shall see, some of these
play an important role as bounds for the general case of
finite $Q$. The most important such function is the resolvent
kernel $R(\cdot|\g)$. It is the solution of (\ref{fredint})
for $Q = \infty$ and with driving term $f_0 (\la) =
K(\la - \m|\g)$.

\begin{lemma}
Properties of the resolvent kernel for $Q = \infty$
\cite{YaYa66c}.
\begin{enumerate}
\item
The resolvent kernel $R(\cdot|\g)$ has the Fourier integral
representation
\begin{equation} \label{rfourier}
     R(\la|\g) =
        \int_{- \infty}^\infty \frac{\rd k}{4 \p}
	\frac{\re^{- \i k \la} \sh \bigl((\p/2 - \g) k\bigr)}
	     {\ch (\g k/2) \sh \bigl((\p - \g) k/2\bigr)} \epc
\end{equation}
valid for $|\Im \la| < \g$.
\item
For $0 < \g < \p/2$, $R(\cdot|\g)$ has the convolution type
representation
\begin{equation} \label{rconv}
     R(\la|\g) =
        \frac{\p}{2 \g(\p - \g)} \int_{- \infty}^\infty \rd \m \:
	\frac{K \Bigl(\frac{\m}{1 - \frac{\g}{\p}}\Big|\frac{\g/2}{1 - \frac{\g}{\p}}\Bigr)}
	     {\ch \bigl((\la - \m) \frac{\p}{\g}\bigr)} \epc
\end{equation}
valid for $|\Im \la| < \g/2$.
\item
For $0 < \g < \p/2$, $R(\cdot|\g)$ is even and positive on
${\mathbb R}$ and monotonically decreasing on ${\mathbb R}_+$,
where it satisfies $\lim_{\la \rightarrow \infty} R(\la|\g) = 0$.
\end{enumerate}
\end{lemma}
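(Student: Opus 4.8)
The statement is classical (cf.\ \cite{YaYa66c}): at $Q = \infty$ everything is explicit via Fourier transformation on the full line, and the plan is to solve the defining equation in Fourier space to obtain (i), recognise the resulting transform as a constant times a product of two elementary transforms to obtain (ii), and read off the qualitative properties in (iii) from the convolution representation. \emph{Part (i).} For $Q = \infty$ the equation (\ref{fredint}) with driving term $K(\cdot - \m|\g)$ is, by translation invariance, the translate of the $\m = 0$ case, so it suffices to treat the full-line convolution equation $R(\cdot|\g) = K(\cdot|\g) - K(\cdot|\g) * R(\cdot|\g)$. Existence and uniqueness of a solution (say in $L^1({\mathbb R}) \cap L^2({\mathbb R})$) follow from the Neumann series, which converges because $0 < {\cal F}[K(\cdot|\g)](k) \le {\cal F}[K(\cdot|\g)](0) = 1 - 2\g/\p < 1$ for $\g \in (0,\p/2)$ by (\ref{kfourier}). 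The convolution theorem gives ${\cal F}[R] = {\cal F}[K]/(1 + {\cal F}[K])$, and the addition theorem $\sh a + \sh b = 2\sh\bigl(\tst{\frac{a+b}{2}}\bigr)\ch\bigl(\tst{\frac{a-b}{2}}\bigr)$ with $a = \p k/2$, $b = (\p/2-\g)k$ turns $1 + {\cal F}[K(\cdot|\g)](k)$ into $2\sh\bigl((\p-\g)k/2\bigr)\ch(\g k/2)/\sh(\p k/2)$, so that
\begin{equation}
     {\cal F}[R(\cdot|\g)](k) = \frac{\sh\bigl((\p/2 - \g)k\bigr)}
        {2\sh\bigl((\p - \g)k/2\bigr)\ch(\g k/2)} \epc
\end{equation}
and Fourier inversion reproduces (\ref{rfourier}). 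Since this transform decays like $\re^{-\g|k|}$ as $|k| \to \infty$, the integrand of (\ref{rfourier}) is absolutely integrable in $k$ and holomorphic in $\la$ on the strip $|\Im \la| < \g$, which fixes the range of validity.

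\emph{Part (ii).} Here I would recognise the right-hand side of (\ref{rfourier}) as a constant times a product of two Fourier transforms. Setting $c = 1 - \g/\p$, a standard residue computation gives ${\cal F}[1/\ch(\p \cdot/\g)](k) = \g/\ch(\g k/2)$, while rescaling (\ref{kfourier}) gives ${\cal F}[K(\cdot/c\,|\,(\g/2)/c)](k) = c\,\sh\bigl((\p/2-\g)k\bigr)/\sh\bigl((\p-\g)k/2\bigr)$, using $\p c/2 - \g/2 = \p/2 - \g$ and $\p c k/2 = (\p-\g)k/2$. Multiplying the two and comparing with the formula for ${\cal F}[R]$ from part (i) yields ${\cal F}[R(\cdot|\g)] = \tst{\frac{\p}{2\g(\p-\g)}}\,{\cal F}[1/\ch(\p\cdot/\g)]\,{\cal F}[K(\cdot/c\,|\,(\g/2)/c)]$, and the convolution theorem delivers precisely (\ref{rconv}). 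For $\g \in (0,\p/2)$ the rescaled anisotropy $(\g/2)/c = \p\g/\bigl(2(\p-\g)\bigr)$ again lies in $(0,\p/2)$, so by Lemma~\ref{lem:propk}(iii) the kernel $K(\cdot/c\,|\,(\g/2)/c)$ has its poles at $\pm\i\g/2$ and decays exponentially on ${\mathbb R}$; combined with the poles of $1/\ch(\p\cdot/\g)$ at $\pm\i\g/2$, this shows that the $\m$-integral in (\ref{rconv}) converges and defines a holomorphic function precisely on $|\Im\la| < \g/2$.

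\emph{Part (iii).} Evenness of $R(\cdot|\g)$ is immediate from either representation. By (\ref{rconv}), $R(\cdot|\g)$ is a positive multiple of the convolution of $1/\ch(\p\cdot/\g)$ with $K(\cdot/c\,|\,(\g/2)/c)$; both factors are even, strictly positive on ${\mathbb R}$, and monotonically decreasing on ${\mathbb R}_+$ — the first by inspection, the second by Lemma~\ref{lem:propk}(i),(ii) since $(\g/2)/c \in (0,\p/2)$. Positivity of $R$ on ${\mathbb R}$ follows at once, and $\lim_{\la\to\infty} R(\la|\g) = 0$ follows from Riemann--Lebesgue applied to (\ref{rfourier}) (or from dominated convergence in (\ref{rconv})). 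For the monotonic decrease on ${\mathbb R}_+$ I would use the elementary fact that the convolution of two even functions, each non-increasing on ${\mathbb R}_+$, is again even and non-increasing on ${\mathbb R}_+$: writing each factor through the layer-cake formula as a non-negative superposition $\int_0^\infty \one_{[-t,t]}\,\rd\mu(t)$ of indicators of symmetric intervals, the convolution becomes a non-negative superposition of the trapezoidal functions $\one_{[-t,t]} * \one_{[-s,s]}$, each of which is visibly even and non-increasing on ${\mathbb R}_+$, strictness of the decrease following from the strict monotonicity of the two factors. The bulk of the work is routine hyperbolic bookkeeping; the only points needing genuine care are keeping the rescaled anisotropy $(\g/2)/c$ straight in part (ii) and invoking the convolution-preserves-symmetric-decrease property in part (iii).
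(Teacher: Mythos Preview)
Your argument is correct. Parts (i) and (ii) match the paper's proof essentially line for line: solve the convolution equation in Fourier space using the addition theorem for $\sh$, then factor $\mathcal{F}[R]$ as a product and invoke the convolution theorem after the rescaling $k \to ck$ with $c = 1 - \g/\p$.

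The one genuine difference is in part (iii), the monotonic decrease on ${\mathbb R}_+$. You appeal to the general principle that the convolution of two symmetric-decreasing functions is symmetric-decreasing, via the layer-cake decomposition into indicators of symmetric intervals. The paper instead differentiates (\ref{rconv}) under the integral sign and symmetrises the integration domain to obtain the explicit formula
\[
     R'(\la|\g) =
        \frac{1}{2\g (1 - \g/\p)^2} \int_0^\infty \rd \m \:
	   K' \Bigl(\frac{\m}{1 - \g/\p}\Big|\g'\Bigr)
	   \biggl\{\frac{1}{\ch\bigl((\la - \m)\frac{\p}{\g}\bigr)} -
	           \frac{1}{\ch\bigl((\la + \m)\frac{\p}{\g}\bigr)}\biggr\} \epc
\]
from which negativity on ${\mathbb R}_+$ is read off directly since $K'(\cdot|\g') < 0$ there and $|\la - \m| < \la + \m$. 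Your rearrangement argument is cleaner and more conceptual for the lemma in isolation; the paper's explicit derivative formula, however, is not throwaway: it is reused verbatim (taking real parts for complex $\la$) in the proof of Theorem~\ref{th:main}(ii) to control $\6_x \Re R(\la|\g)$ in the strip $|\Im \la| < \g/2$. So the paper's route is doing double duty, whereas yours would need to be supplemented later. For the limit $R(\la|\g) \to 0$, your Riemann--Lebesgue argument is more direct than the paper's splitting estimate on the convolution.
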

\begin{proof}
(i) Fourier transforming the integral equation
\begin{equation} \label{intr}
     R(\la|\g) = K(\la|\g) - \int_{- \infty}^\infty \rd \m \: K(\la - \m|\g) R(\m|\g)
\end{equation}
and solving for ${\cal F} [R(\cdot|\g)]$ we obtain
\begin{equation}
     {\cal F} [R(\cdot|\g)] (k) = 
	\frac{\sh \bigl((\p/2 - \g) k\bigr)}
	     {2 \ch (\g k/2) \sh \bigl((\p - \g) k/2\bigr)} \epp
\end{equation}
For $k \rightarrow \pm \infty$ we see that ${\cal F} [R(\cdot|\g)] (k)
\sim \re^{- \g|k|}$, implying that the back transformation
(\ref{rfourier}) converges for all $\la$ with $|\Im \la| < \g$.

(ii) The convolution type representation is obtained from
(\ref{rfourier}) by rescaling $k \rightarrow k/(1 - \g/\p)$ and
setting
\begin{equation} \label{defgprime}
     \g' = \frac{\g/2}{1 - \g/\p} \epp
\end{equation}
Then
\begin{equation}
     R(\la|\g) = \frac{1}{1 - \g/\p} \int_{- \infty}^\infty \frac{\rd k}{4 \p}
                 \frac{\re^{- \i \frac{k \la}{1 - \g/\p}}}{\ch(\g' k)}
		 {\cal F}[K(\cdot|\g')] (k) \epc
\end{equation}
which implies (\ref{rconv}) by employing the convolution theorem
on the right hand side. Note that $\g \mapsto \g'$ is a
monotonically increasing function that maps $(0,\p/2) \rightarrow
(0,\p/2)$. Because of the poles of $K(\cdot/(1 - \g/\p)|\g')$ and
$1/\ch(\cdot\: \p/\g)$ at $\pm \i \g/2$ the validity of the
representation (\ref{rconv}) is restricted to $|\Im \la| < \g/2$.

(iii) From the representation (\ref{rconv}) it is clear that
$R(\la|\g) > 0$ and that $R(\la|\g)$ is even in~$\la$. Both,
$K(\la/(1 - \g/\p)|\g')$ and $1/\ch(\la \p/\g)$, are even, positive,
integrable over ${\mathbb R}$, and go to zero monotonically for
$\la \rightarrow \pm \infty$. For any two kernels $K_1$, $K_2$
with these properties and all $\la > 0$ we have the estimate
\begin{multline}
     \int_{- \infty}^\infty \rd \m \: K_1 (\la - \m) K_2 (\m) \\
        < K_1 (\la/2) \int_{- \infty}^{\la/2} \rd \m \: K_2 (\m)
	  + K_2 (\la/2) \int_{\la/2}^\infty \rd \m \: K_1 (\la - \m) \epp
\end{multline}
Hence, (\ref{rconv}) implies that $\lim_{\la \rightarrow \pm \infty}
R(\la|\g) = 0$.

Furthermore,
\begin{multline} \label{rprimereal}
     R'(\la|\g) =
        \frac{1}{2\g (1 - \g/\p)^2} \int_0^\infty \rd \m \:
	   K' \Bigl(\frac{\m}{1 - \g/\p}\Big|\g'\Bigr) \\ \times
	   \biggl\{\frac{1}{\ch\bigl((\la - \m)\frac{\p}{\g}\bigr)} -
	           \frac{1}{\ch\bigl((\la + \m)\frac{\p}{\g}\bigr)}\biggr\} \epp
\end{multline}
Now $1/\ch$ is even and monotonically decreasing on ${\mathbb R}_+$,
and $|\la - \m| < \la + \m$ for all $\la, \m \in {\mathbb R}_+$,
implying that the term in the curly brackets under the integral
is positive. Since $K'(\m/(1 - \g/\p)|\g') < 0$ for $\m > 0$, it
follows that $R'(\la|\g) < 0$ for all $\la > 0$.
\end{proof}
Two more `dressed functions' for $Q = \infty$ that will be needed
below are $\e_\infty$, the solution of (\ref{fredint}) with $Q =
\infty$ and $f_0 = \e_0$, and $\r_\infty$, the solution of
(\ref{fredint}) with $Q = \infty$ and $f_0 = K(\cdot|\g/2)$.
Using the convolution theorem we see that
\begin{subequations}
\begin{align}
     \e_\infty (\la) & =
        \frac{h}{2\bigr(1 - \frac{\g}{\p}\bigl)}
	- \frac{2 \p J \sin(\g)}{\g \ch\bigl(\frac{\p\la}{\g}\bigr)} \epc \\
     \r_\infty (\la) & =
	\frac{1}{2 \g \ch\bigl(\frac{\p\la}{\g}\bigr)} \epp
\end{align}
\end{subequations}

\subsection{\boldmath The general case of finite $Q$}
The existence of a unique solution of (\ref{fredint}) can be
established by standard arguments. Consider the linear integral
operator $\hat K: C^0 \bigl([-Q,Q]\bigr) \rightarrow
C^0 \bigl([-Q,Q]\bigr)$ defined by
\begin{equation}
     \hat K f (\la) = \int_{-Q}^Q \rd \m \: K(\la - \m|\g) f(\m)
\end{equation}
for $C^0 \bigl([-Q,Q]\bigr)$ equipped with the sup-norm $\|\cdot\|_\infty$.
Then
\begin{multline} \label{kcontract}
     \|\hat K\| = \sup_{f \in C^0 ([-Q,Q])} \frac{\|\hat K f\|_\infty}{\|f\|_\infty}
        \le \max_{\la \in [-Q,Q]} \int_{-Q}^Q \rd \m \: K(\la - \m|\g) \\
	< \int_{- \infty}^\infty \rd \m \: K(\m|\g) = {\cal F}[K(\cdot|\g)] (0)
	= 1 - \frac{2 \g}{\p} < 1 \epc
\end{multline}
which proves the convergence of the series $\sum_{n=0}^\infty (- \hat K)^n =
\bigl(\id + \hat K\bigr)^{-1}$.

The resolvent kernel $R_Q (\la, \m)$ is the solution of
(\ref{fredint}) with $f_0 (\la) = K (\la - \m|\g)$.
In our notation $R_Q (\la,\m)$ we suppress the parametric
dependence of $R_Q$ on $\g$, since it will be fixed throughout
this work. $R_Q$ has the following properties.
\begin{lemma} \label{lem:reskerqfinite}
Resolvent kernel at finite $Q$ \cite{DGK14b}.
\begin{enumerate}
\item
$R_Q (\cdot, \m)$ is meromorphic on $S_\g (Q)$ with simple
poles at $\m \pm \i \g$ and depends smoothly on $Q \in
{\mathbb R}_+$.
\item
The integral operator associated with $R_Q$ commutes with
$\hat K$,
\begin{equation} \label{resolvcom}
     \int_{-Q}^Q \rd \n \: K(\la - \n|\g) R_Q (\n, \m)
        = \int_{-Q}^Q \rd \n \: R_Q (\la, \n) K(\n - \m|\g) \epp
\end{equation}
\item
$R_Q (\la, \m) = R_Q (\m, \la)$ and $R_Q (\la, \m) = R_Q (- \la, - \m)$.
\end{enumerate}
\end{lemma}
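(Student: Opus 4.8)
The plan is to treat the three items in turn, using the convergent Neumann series established in \eqref{kcontract} as the basic tool. For item (i), I would write $R_Q$ explicitly as the Neumann series
\begin{equation*}
     R_Q (\la, \m) = \sum_{n=0}^\infty (- \hat K)^n K(\cdot - \m|\g)\big|_\la
        = K(\la - \m|\g) + \sum_{n=1}^\infty (-1)^n \int_{[-Q,Q]^n} \! \rd \n_1 \cdots \rd \n_n \:
          K(\la - \n_1|\g) K(\n_1 - \n_2|\g) \cdots K(\n_n - \m|\g) \epp
\end{equation*}
Every term with $n \ge 1$ is, as a function of $\la$, an integral over the compact set $[-Q,Q]$ of the holomorphic $\i\p$-periodic function $\la \mapsto K(\la - \n_1|\g)$ times a constant, hence holomorphic on $S_\g(Q)$; the $n=0$ term $K(\la-\m|\g)$ supplies the only singularities, the simple poles at $\m \pm \i\g$ (inside $S_\g(Q)$ since $0<\g<\p/2$, by Lemma~\ref{lem:propk}(iii)). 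Uniform-on-compacts convergence of the series of holomorphic pieces — which follows from the geometric bound $\|\hat K\| < 1$ together with the fact that $K(\cdot - \n_1|\g)$ stays bounded on compact subsets of $S_\g(Q)$ away from the cuts — then gives meromorphy of the sum by the Weierstrass theorem. Smoothness in $Q$ likewise follows term by term: each multiple integral depends smoothly on the endpoint $Q$ (the integrand is jointly smooth and the domain is a product of intervals with endpoints $\pm Q$), and differentiating the series in $Q$ preserves the geometric majorant, so one may differentiate under the sum.

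For item (ii), the commutation relation \eqref{resolvcom} is the operator identity $\hat K (\id + \hat K)^{-1} = (\id + \hat K)^{-1} \hat K$ written out as kernels. Since $(\id + \hat K)^{-1} = \sum_{n \ge 0}(-\hat K)^n$ is a norm-convergent power series in $\hat K$, it commutes with $\hat K$; translating back into integral kernels, $\int_{-Q}^Q \rd\n\, K(\la-\n|\g) R_Q(\n,\m) = \int_{-Q}^Q \rd\n\, R_Q(\la,\n) K(\n-\m|\g)$, which is \eqref{resolvcom}. (Equivalently, one can observe that both sides equal $R_Q(\la,\m) - K(\la-\m|\g)$: the left side by the defining equation $R_Q(\la,\m) = K(\la-\m|\g) - \int_{-Q}^Q K(\la-\n|\g) R_Q(\n,\m)\,\rd\n$ read for $R_Q$ itself, and the right side by applying the integral equation \eqref{fredint} with driving term $K(\cdot-\m|\g)$ and then using $\hat K$-invariance of the construction; this second route is worth keeping since it is exactly the form in which the identity will be used later.)

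For item (iii), symmetry under $\la \leftrightarrow \m$ follows from the evenness of the kernel, $K(\la-\m|\g) = K(\m-\la|\g)$, which makes $\hat K$ a self-adjoint operator on $L^2([-Q,Q])$; its resolvent kernel is therefore symmetric. Cleanly: each term in the Neumann series above is manifestly invariant under simultaneously reversing the chain $\la, \n_1, \ldots, \n_n, \m \mapsto \m, \n_n, \ldots, \n_1, \la$ and relabelling the integration variables, because $K$ is even. The second symmetry $R_Q(\la,\m) = R_Q(-\la,-\m)$ comes the same way from $K(-\la+\m|\g) = K(\la-\m|\g)$ (again evenness) combined with the substitution $\n_j \mapsto -\n_j$, which maps $[-Q,Q]$ to itself.

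The only genuinely delicate point is the interchange of the infinite sum with the $\la$-limit/derivative in item (i): one must check that the holomorphic remainder terms of the Neumann series converge uniformly on compact subsets of $S_\g(Q)$ (and, for the $Q$-smoothness, that the $Q$-differentiated series still has a summable majorant). This is where a little care is needed, but it is not hard — the bound $\|\hat K\| \le \max_\la \int_{-Q}^Q K(\la-\m|\g)\,\rd\m < 1$ from \eqref{kcontract}, applied to the tail of the series after factoring off the single singular kernel $K(\la-\n_1|\g)$ (which is bounded on the relevant compacta), furnishes the required geometric control, and everything else is routine.
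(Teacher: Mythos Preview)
Your argument is correct; the main line for each item is sound. (Two small blemishes in the parenthetical alternative for (ii): the common value is $K(\la-\m|\g) - R_Q(\la,\m)$, not its negative, and the justification you sketch for the right-hand side is essentially circular --- it presupposes the left-resolvent equation, which is equivalent to \eqref{resolvcom}. But your primary argument, that $(\id+\hat K)^{-1}$ is a norm-convergent power series in $\hat K$ and hence commutes with $\hat K$, is complete on its own.)

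The paper takes a somewhat different route. For (i) it uses the Fredholm series rather than the Neumann series: the resolvent is written as a ratio of an absolutely convergent series of Fredholm minors to the Fredholm determinant $\det[\id+\hat K]$, the latter being nonzero because the spectral radius of $\hat K$ is less than one. Each minor is an $(n+1)\times(n+1)$ determinant whose first row carries all the $\la$-dependence through $K(\la-\m|\g)$ and $K(\la-\n_b|\g)$, so meromorphy with simple poles at $\m\pm\i\g$ and smoothness in $Q$ are read off exactly as in your Neumann approach. For (ii) the paper introduces an auxiliary ``left resolvent'' $\overline R_Q$ defined by $\overline R_Q(\la,\m)+\int_{-Q}^Q\overline R_Q(\la,\n)K(\n-\m|\g)\,\rd\n=K(\la-\m|\g)$ and then shows $\overline R_Q=R_Q$ by a short manipulation of the two integral equations; your operator-theoretic one-liner is more conceptual. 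For (iii) the paper deduces both symmetries from uniqueness of solutions of \eqref{fredint} (using (ii) for the first symmetry), whereas you verify them term by term in the Neumann expansion. Both approaches are elementary; yours is self-contained via the Neumann series throughout, while the paper's Fredholm representation has the mild advantage of working verbatim even when $\|\hat K\|\ge 1$ (as long as $-1\notin\spec\hat K$), which is not needed here.
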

\begin{proof}
(i) It follows from (\ref{kcontract}) that the spectral radius of
$\hat{K}$ is strictly less than one. Hence, its Fredholm determinant
\begin{equation}
     \det \bigl[\id + \hat{K} \bigr]
        = \sum_{n \geq 0} \frac{1}{n!}
	  \int_{-Q}^Q \rd^n \: \nu
	  \det_n \bigl[K(\nu_a - \nu_b |\g)\bigr]
\end{equation}
does not vanish, uniformly in $Q > 0$ and is bounded. Clearly, it
is also a smooth function of $Q$. The resolvent kernel $R_{Q}(\la,\m)$
is given by the below, absolutely convergent, series of multiple
integrals, see e.g.\ \cite{GGK00},
\begin{equation}
     R_Q (\la,\m) = 
        \frac{1}{\det \bigl[\id + \hat{K} \bigr]}
        \sum_{n \geq 0} \frac{1}{n!}
	\int_{-Q}^Q \rd^n \nu \: \det_{n+1}
	   \begin{bmatrix}
	      K(\la - \m|\g) & K(\la - \nu_b|\g)  \\
	      K(\nu_a - \m|\g)& K(\nu_a - \nu_b|\g) 
           \end{bmatrix} \epp
\end{equation}
This readily entails that $\la \mapsto R_Q (\la,\m)$ is
meromorphic on $S_\g (Q)$ with simple poles at $\mu \pm \i \g$.
Since each summand of the above absolutely convergent series
is a smooth function of $Q$ belonging to compact subsets of
${\mathbb R}_+$, the same follows for the resolvent kernel.

(ii) Consider a kernel $\overline{R}_Q (\la, \m)$ defined as the
unique solution of the integral equation
\begin{equation} \label{leftresolvent}
     \overline{R}_Q (\la, \m) =
        K(\la - \m|\g) - \int_{-Q}^Q \rd \n \: \overline{R}_Q (\la, \n)
	                    K(\n - \m|\g) \epp
\end{equation}
Using this equation and the integral equation for $R_Q (\cdot,\m)$,
we see that
\begin{multline}
     \int_{-Q}^Q \rd \n \: \overline{R}_Q (\la, \n) R_Q (\n, \m) +
        \int_{-Q}^Q \rd \n_1 \: \int_{-Q}^Q \rd \n_2 \:
	   \overline{R}_Q (\la, \n_1) K(\n_1 - \n_2|\g) R_Q (\n_2, \m) \\
        = \int_{-Q}^Q \rd \n \: K(\la - \n|\g) R_Q (\n, \m)
        = \int_{-Q}^Q \rd \n \: \overline{R}_Q (\la, \n) K(\n - \m|\g) \epp
\end{multline}
Substituting the last equation into (\ref{leftresolvent}) and
comparing with the defining integral equation for $R_Q (\cdot,\m)$
we conclude that $\overline{R}_Q (\la, \m) = R_Q (\la, \m)$, which
proves the claim.

The first statement of (iii) follows by interchanging $\la$ and
$\m$ in the defining integral equation for $R_Q (\la, \m)$,
then using (\ref{resolvcom}) and the uniqueness of the solution
of the integral equation. Using the uniqueness also the second
statement follows by negating $\la$ and $\m$ in the defining
integral equation and exploiting that $K (\cdot|\g)$ is even.
\end{proof}

Every solution of (\ref{fredint}) with a driving term $f_0$ that
is uniformly bounded on ${\mathbb R}$ satisfies a second linear
integral equation \cite{YaYa66c} with respect to the complementary contour
${\mathbb R} \setminus [-Q,Q]$. By definition $f_\infty$ is
the solution of the integral equation
\begin{equation}
     f_\infty (\la) = f_0 (\la)
        - \int_{-\infty}^\infty \rd \m \:
	  K(\la - \m|\g) f_\infty (\m) \epp
\end{equation}
If $f_0$ is uniformly bounded on ${\mathbb R}$, then the
same holds for $f$ as follows from (\ref{fredint}), and
\begin{equation}
     f(\la) = f_0 (\la) +
        \int_{{\mathbb R} \setminus [-Q,Q]} \rd \m \: K(\la - \m|\g) f(\m)
	- \int_{- \infty}^\infty \rd \m \: K(\la - \m|\g) f(\m) \epp
\end{equation}
Conceiving this equation as an integral equation on the real axis
with driving term $f_0 (\la) + \int_{{\mathbb R} \setminus [-Q,Q]} \rd \m \:
K(\la - \m|\g) f(\m)$ and using its linearity we obtain
\begin{equation} \label{fredintcompl}
     f(\la) = f_\infty (\la) +
        \int_{{\mathbb R} \setminus [-Q,Q]} \rd \m \: R(\la - \m|\g) f(\m) \epc
\end{equation}
which is the complementary equation mentioned above. In particular,
\begin{equation} \label{resolvcompl}
     R_Q (\la, \m) = R (\la - \m|\g) +
        \int_{{\mathbb R} \setminus [-Q,Q]} \rd \n \: R(\la - \n|\g) R_Q (\n, \m) \epp
\end{equation}
\begin{lemma}
Solutions of (\ref{fredint}), for which $f_0$ is a uniformly
bounded continuous function on~${\mathbb R}$, can
be represented by means of the resolvent kernel in two different
ways,
\begin{subequations}
\begin{align} \label{fredresform1}
     f(\la) & = f_0 (\la) - \int_{-Q}^Q \rd \m \: R_Q (\la, \m) f_0 (\m) \\
            & = f_\infty (\la) + 
                \int_{{\mathbb R} \setminus [-Q,Q]} \rd \m \:
		   R_Q (\la, \m) f_\infty (\m) \epp \label{fredresform2}
\end{align}
\end{subequations}
\end{lemma}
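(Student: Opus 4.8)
The plan is to establish the two representations separately, in each case exhibiting a function that solves the appropriate integral equation and then invoking uniqueness.

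For (\ref{fredresform1}) I would work on $[-Q,Q]$ with the operator $\hat K$ and propagate the result afterwards. Rewrite (\ref{fredint}) as $(\id + \hat K) f = f_0$; by construction $R_Q(\cdot,\m)$ solves (\ref{fredint}) with driving term $K(\cdot - \m|\g)$, i.e.\ $(\id + \hat K) R_Q(\cdot,\m) = K(\cdot - \m|\g)$. Setting $g(\la) := f_0(\la) - \int_{-Q}^Q \rd \m \: R_Q(\la,\m) f_0(\m)$ and commuting $\id + \hat K$, which acts on the variable $\la$, past the $\m$-integration — legitimate by Fubini, all integrands being continuous on the compact square $[-Q,Q]^2$ — one gets $(\id + \hat K) g = (f_0 + \hat K f_0) - \hat K f_0 = f_0$, where the subtracted term is $\int_{-Q}^Q \rd \m \: (\id+\hat K)R_Q(\cdot,\m)\, f_0(\m) = \int_{-Q}^Q \rd \m \: K(\cdot-\m|\g) f_0(\m) = \hat K f_0$. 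Since $\|\hat K\| < 1$ by (\ref{kcontract}), $\id + \hat K$ is invertible on $C^0([-Q,Q])$, so $g = f$ on $[-Q,Q]$; substituting this back into (\ref{fredint}) and using the defining equation of $R_Q$ once more extends (\ref{fredresform1}) to all $\la$. (Equivalently, $\hat R_Q = (\id + \hat K)^{-1}\hat K$, whence $(\id + \hat K)^{-1} = \id - \hat R_Q$.) This step needs only continuity of $f_0$ on $[-Q,Q]$.

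For (\ref{fredresform2}) I would use the complementary equation (\ref{fredintcompl}) together with the complementary resolvent identity (\ref{resolvcompl}). The first observation is that (\ref{resolvcompl}) in fact holds for every $\m \in {\mathbb R}$ — not only $\m \in [-Q,Q]$ — because it follows from applying (\ref{fredintcompl}) to $f = R_Q(\cdot,\m)$, whose driving term $K(\cdot - \m|\g)$ is uniformly bounded and continuous on ${\mathbb R}$ for any such $\m$, with $f_\infty = R(\cdot - \m|\g)$. Put $D = {\mathbb R} \setminus [-Q,Q]$ and let $\Phi(\la) = f_\infty(\la) + \int_D \rd \m \: R_Q(\la,\m) f_\infty(\m)$ denote the right-hand side of (\ref{fredresform2}); $f_\infty$ is bounded and continuous on ${\mathbb R}$ (as is $f_0$ by hypothesis) and $R(\cdot|\g)$, $R_Q$ decay sufficiently fast, so all integrals below converge absolutely. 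Feeding $\Phi$ into the operator of (\ref{fredintcompl}), then using Fubini and (\ref{resolvcompl}) in the form $\int_D \rd \n \: R(\la - \n|\g) R_Q(\n,\m) = R_Q(\la,\m) - R(\la - \m|\g)$, valid for $\m \in D$, one obtains
\begin{multline*}
     \int_D \rd \n \: R(\la - \n|\g) \Phi(\n)
        = \int_D \rd \n \: R(\la - \n|\g) f_\infty(\n)
          + \int_D \rd \m \: \Bigl( \int_D \rd \n \: R(\la - \n|\g) R_Q(\n,\m) \Bigr) f_\infty(\m) \\
        = \int_D \rd \m \: R_Q(\la,\m) f_\infty(\m) = \Phi(\la) - f_\infty(\la) \epc
\end{multline*}
the first and the last term of the middle expression cancelling after relabelling. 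Hence $\Phi$ satisfies $\Phi(\la) = f_\infty(\la) + \int_D \rd \n \: R(\la - \n|\g) \Phi(\n)$, which is exactly the equation obeyed by $f$ according to (\ref{fredintcompl}). Because $R(\cdot|\g) > 0$ and $\int_{\mathbb R} \rd \m \: R(\m|\g) = {\cal F}[R(\cdot|\g)](0) < 1$, the integral operator on $D$ with kernel $R(\la - \n|\g)$ is a contraction on the bounded continuous functions on $D$; this equation therefore has a unique such solution, so $\Phi|_D = f|_D$, and substituting once more into the equation yields $\Phi = f$ on all of ${\mathbb R}$.

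The argument is in essence bookkeeping, so the care has to go into: verifying that (\ref{resolvcompl}) is available for $\m$ in the complementary region $D$; keeping track of the symmetry $R_Q(\la,\m) = R_Q(\m,\la)$ and the evenness of $R(\cdot|\g)$ when rearranging kernels; and justifying the Fubini interchanges, which rest on the integrability of $R(\cdot|\g)$ and the decay of $R_Q$. No genuine analytic obstacle arises; the only substantive inputs are the contraction bounds $\|\hat K\| < 1$ and $\int_{\mathbb R} R(\cdot|\g) < 1$, both already in hand.
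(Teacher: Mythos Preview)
Your argument is correct and follows essentially the same route as the paper: both proofs rest on the resolvent identities for $R_Q$ together with (\ref{fredintcompl}) and (\ref{resolvcompl}), plus a Fubini interchange. The only cosmetic difference is that the paper derives the key identity $\int_{-Q}^{Q}\rd\m\,K(\la-\m|\g)f(\m)=\int_{-Q}^{Q}\rd\m\,R_Q(\la,\m)f_0(\m)$ by a direct ``multiply and compare'' step (pairing the left resolvent equation with $f$ and the original equation with $R_Q$), whereas you define the candidate and verify it satisfies the same equation as $f$, then invoke uniqueness via the contraction bounds; the content is identical.
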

\begin{proof}
For the first equation we multiply
\begin{equation}
     R_Q (\la, \m) + \int_{-Q}^Q \rd \n \: R_Q (\la, \n) K(\n - \m|\g) =
        K(\la - \m|\g)
\end{equation}
by $f(\m)$ and integrate over $\m$. Similarly, we multiply
(\ref{fredint}), with $\la$ replaced by $\m$, by $R_Q (\la, \m)$
and integrate over $\m$. It follows that
\begin{equation}
     \int_{-Q}^Q \rd \m \: K(\la - \m|\g) f(\m) =
        \int_{-Q}^Q \rd \m \: R_Q (\la, \m) f_0 (\m) \epp
\end{equation}
When reinserted into (\ref{fredint}), this proves (\ref{fredresform1}).
In order to prove (\ref{fredresform2}) apply a similar argument to
(\ref{fredintcompl}), (\ref{resolvcompl}).
\end{proof}
\begin{lemma} \label{lem:propsrq}
Bounds on $R_Q$ \cite{DGK14b}. Let $0 < \g < \p/2$. Then
\begin{enumerate}
\item
\begin{equation} \label{rqr}
     R_Q (\la, \m) > R(\la - \m|\g)
\end{equation}
uniformly in $(\la, \m) \in {\mathbb R}^2$.
\item
\begin{equation} \label{rqrqlower}
     R_Q (\la, \m) - R_Q (\la, - \m) > R(\la - \m|\g) - R(\la + \m|\g) > 0
\end{equation}
for all $\la, \m > 0$.
\end{enumerate}
\end{lemma}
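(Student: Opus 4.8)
The plan is to derive both bounds from the complementary representation (\ref{resolvcompl}), from the structural properties of $R(\cdot|\g)$ --- evenness, positivity on ${\mathbb R}$, and strict monotonic decrease on ${\mathbb R}_+$ --- and from the symmetries $R_Q(\la,\m) = R_Q(\m,\la) = R_Q(-\la,-\m)$ of Lemma~\ref{lem:reskerqfinite}. Two preliminary observations will be used throughout. First, $R_Q$ is bounded on ${\mathbb R}^2$: from its defining integral equation and (\ref{kcontract}) one gets $\|R_Q(\cdot,\m)\|_\infty \le (1 - \|\hat K\|)^{-1}\, \|K(\cdot|\g)\|_\infty$ uniformly in $\m$, with $K(\cdot|\g)$ bounded by (\ref{realk}); together with the continuity and integrability (indeed exponential decay) of $R(\cdot|\g)$ this places each $R_Q(\cdot,\m)$ in $C_{\rm b}({\mathbb R})$ and makes all the integrals below absolutely convergent, so that Fubini applies freely. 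Second, $\int_{\mathbb R} \rd \m\, R(\m|\g) = {\cal F}[R(\cdot|\g)](0) = \frac{\p/2 - \g}{\p - \g} < 1$ for $0 < \g < \p/2$, obtained by evaluating the Fourier transform of (\ref{intr}) at $k = 0$ with the help of (\ref{kfourier}), or by taking $k \to 0$ in (\ref{rfourier}).

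For part (i) I would first prove $R_Q > 0$ on ${\mathbb R}^2$. Let $\widetilde R$ denote the integral operator on $C_{\rm b}({\mathbb R})$ (with sup-norm) whose kernel is $R(\la - \n|\g)$ restricted to $|\n| > Q$; then (\ref{resolvcompl}) reads $R_Q(\cdot,\m) = R(\cdot - \m|\g) + \widetilde R\, R_Q(\cdot,\m)$, and since $\|\widetilde R\| \le \int_{\mathbb R} \rd \m\, R(\m|\g) < 1$ this yields the convergent Neumann series $R_Q(\cdot,\m) = \sum_{n \ge 0} \widetilde R^{\,n} R(\cdot - \m|\g)$. All its terms are nonnegative because $R(\cdot|\g) > 0$, and the $n = 0$ term is strictly positive, so $R_Q > 0$. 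Plugging this back into (\ref{resolvcompl}) gives
\[
     R_Q(\la,\m) - R(\la - \m|\g) = \int_{{\mathbb R} \setminus [-Q,Q]} \rd \n\: R(\la - \n|\g)\, R_Q(\n,\m) > 0 \epc
\]
the integrand being strictly positive on a set of positive measure; this is (\ref{rqr}).

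For part (ii), the lower bound $R(\la - \m|\g) - R(\la + \m|\g) > 0$ for $\la,\m > 0$ is immediate from evenness and strict monotonic decrease of $R(\cdot|\g)$ together with $|\la - \m| < \la + \m$. For the remaining inequality I set $g(\la,\m) = R_Q(\la,\m) - R_Q(\la,-\m)$ and $g_0(\la,\m) = R(\la - \m|\g) - R(\la + \m|\g)$, write (\ref{resolvcompl}) for $R_Q(\la,\m)$ and for $R_Q(-\la,\m)$, subtract, and use evenness of $R(\cdot|\g)$ (to turn $R(-\la-\m|\g)$ and $R(-\la-\n|\g)$ into $R(\la+\m|\g)$ and $R(\la+\n|\g)$). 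Splitting the complementary contour as $(Q,\infty) \cup (-\infty,-Q)$, substituting $\n \to -\n$ on the second piece, and invoking once more evenness of $R(\cdot|\g)$ together with $R_Q(-\n,\m) = R_Q(\n,-\m)$, one arrives at the closed equation
\[
     g(\la,\m) = g_0(\la,\m) + \int_Q^\infty \rd \n\: \bigl[R(\la - \n|\g) - R(\la + \n|\g)\bigr]\, g(\n,\m) \qd (\la,\m > 0) \epp
\]
Here the kernel $R(\la - \n|\g) - R(\la + \n|\g)$ is strictly positive for $\la > 0$, $\n > Q$ (again by evenness and monotonicity), and the associated operator ${\cal G}$ on $C_{\rm b}((0,\infty))$ has norm $\sup_{\la > 0} \int_Q^\infty \rd \n\, [R(\la - \n|\g) - R(\la + \n|\g)] = \sup_{\la > 0} \int_{Q - \la}^{Q + \la} \rd u\, R(u|\g) \le \int_{\mathbb R} \rd u\, R(u|\g) < 1$. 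Hence $g(\cdot,\m) = \sum_{n \ge 0} {\cal G}^n g_0(\cdot,\m)$ with all terms nonnegative, so $g(\cdot,\m) \ge g_0(\cdot,\m) > 0$ on $(0,\infty)$; reinserting this into the closed equation shows that the integral there is strictly positive, hence $g(\la,\m) > g_0(\la,\m)$ for $\la,\m > 0$, which together with the lower bound is (\ref{rqrqlower}).

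I expect the one genuinely delicate step to be the folding in part (ii): it is essential that the \emph{antisymmetric} combination $g = R_Q(\cdot,\m) - R_Q(\cdot,-\m)$ --- not $R_Q$ itself --- reproduces under (\ref{resolvcompl}) against a manifestly positive kernel supported on $(Q,\infty)$, which hinges on evenness of $R(\cdot|\g)$ combined with $R_Q(\la,\m) = R_Q(-\la,-\m)$. Everything else is routine: the boundedness and continuity estimates from the first paragraph, the explicit value of ${\cal F}[R(\cdot|\g)](0)$, and the positivity-of-Neumann-series argument, which is applied in exactly the same way in both (i) and (ii) once the relevant self-reproducing structure with a positive kernel has been set up.
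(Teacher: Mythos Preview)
Your proof is correct and follows essentially the same route as the paper: both parts are obtained from the complementary representation (\ref{resolvcompl}), with part~(ii) derived by folding (\ref{resolvcompl}) against the symmetry $R_Q(-\la,\m)=R_Q(\la,-\m)$ to obtain the closed equation for $g=R_Q(\cdot,\m)-R_Q(\cdot,-\m)$ with positive driving term and positive kernel, and then reading off positivity from the Neumann series. Your write-up is somewhat more explicit than the paper's in justifying convergence (the bounds $\int_{\mathbb R}R(\m|\g)\,\rd\m<1$ and $\|\widetilde R\|,\|{\cal G}\|<1$, and the boundedness of $R_Q$ on ${\mathbb R}^2$), but these are exactly the details the paper's terse ``all iterations are positive'' leaves implicit.
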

\begin{proof}
(i) follows from (\ref{resolvcompl}) and the fact that $R(\la|\g) > 0$
for all $\la \in {\mathbb R}$, since all terms in the iterative
(Neumann series) solution are positive.

(ii) Using (\ref{resolvcompl}) and Lemma~\ref{lem:reskerqfinite}
we obtain
\begin{multline} \label{diffrqinteqn}
     R_Q (\la, \m) - R_Q (\la, - \m) = R(\la - \m|\g) - R(\la + \m|\g) \\
        + \int_Q^\infty \rd \n \: \bigl(R(\la - \n|\g) - R(\la + \n|\g)\bigr)
             \bigl(R_Q (\n, \m) - R_Q (\n, - \m)\bigr) \epp
\end{multline}
Since $R $ is even, $R(\la - \m|\g) = R(|\la - \m||\g)$. Since
$|\la - \m| < \la + \m$ for $\la, \m \in {\mathbb R}_+$, $R(\cdot|\g)$
being decreasing on ${\mathbb R}_+$ then implies that $R(\la - \m|\g)
- R(\la + \m) > 0$ for $\la, \m \in {\mathbb R}_+$. It follows that
the driving term of the integral equation (\ref{diffrqinteqn}) and
all its iterations are positive, which entails the claim.
\end{proof}

\section{Proofs}
\subsection{Proof of Theorem~\ref{th:fpoints}}
(i) The continuity in $Q$ follows from the continuity of
$R_Q$ in $Q$ that was established above. The evenness in $\la$
follows, since $\e_0$ is even.

(ii) The lower bound follows from (\ref{fredresform1}) with
$f_0 = \e_0$ and the fact that $R_Q (\la, \m) > 0$ for all
$\la, \m \in {\mathbb R}$ and $\e_0 (\la) < 0$ for all
$\la \in [- Q_0,Q_0]$. For the upper bound we introduce the
dressed charge function $Z(\la|Q)$, which is the solution of
(\ref{fredint}) with driving term $f_0 (\la) = 1$, and the
root density $\r (\la|Q)$, the solution of (\ref{fredint})
with $f_0 (\la) = \r_0 (\la) = K(\la|\g/2)$. Then
\begin{equation} \label{epszrho}
     \e(\la|Q) = h Z(\la|Q) - 4 \p J \sin(\g) \r (\la|Q) \epp
\end{equation}
For the dressed charge function we have the upper bound
\begin{equation} \label{zresform}
     Z(\la|Q) = 1 - \int_{-Q}^Q \rd \m \: R_Q (\la, \m) < 1 \epc
\end{equation}
since $R_Q (\la, \m) > 0$, while for the root density
\begin{equation} \label{rhoresform}
     \r(\la|Q) = \r_\infty (\la) +
        \int_{{\mathbb R} \setminus [-Q,Q]} \rd \m \:
	R_Q (\la, \m) \r_\infty (\m) > \r_\infty (\la) \epc
\end{equation}
since $\r_\infty (\la) > 0$ as well. Thus,
\begin{equation}
     \e(\la|Q) < h - 4 \p J \sin(\g) \r_\infty (\la) = \e_u (\la) \epp
\end{equation}

(iii) We take the derivative of the `resolvent form'
(\ref{fredintcompl}) of the integral equation for $\e(\cdot|Q)$,
use partial integration and the fact that $\e(\cdot|Q)$ is even.
Then
\begin{align} \label{epsprime}
     \e' (\la|Q) & = \e(Q|Q) \bigl(R(\la - Q|\g) - R(\la + Q|\g)\bigr)
        \notag \\ & \mspace{90.mu}
        + \e_\infty' (\la)
	+ \int_{{\mathbb R} \setminus [-Q,Q]} \rd \m \: R(\la - \m|\g) \e' (\m|Q)
	\notag \\[1ex]
	& = \e(Q|Q) \bigl(R_Q (\la,Q) - R_Q (\la,-Q)\bigr)
        \notag \\ & \mspace{90.mu}
        + \e_\infty' (\la)
	+ \int_Q^\infty \rd \m \: \bigl(R_Q (\la,\m) - R_Q (\la,-\m) \bigr)
	  \e_\infty' (\m) \epp
\end{align}
On the other hand
\begin{equation}
     \6_Q \e(\la|Q) = - \e(Q|Q) \bigl(R_Q (\la,Q) + R_Q (\la,-Q)\bigr) \epp
\end{equation}
Combining the latter two equations we obtain
\begin{multline} \label{totalqderivative}
     \frac{\rd \e(Q|Q)}{\rd Q} = - 2 \e(Q|Q) R_Q (Q,-Q) \\
        + \e_\infty' (Q)
	+ \int_Q^\infty \rd \m \: \bigl(R_Q (Q,\m) - R_Q (Q,-\m) \bigr)
	  \e_\infty' (\m) \epp
\end{multline}
Now $\e_\infty' (\la) > 0$ for $\la > 0$ and the bracket under
the integral is positive because of~(\ref{rqrqlower}). Thus,
$\e (Q|Q) = 0 \ \then \frac{\rd \e (Q|Q)}{\rd Q} > 0$, meaning that
every zero of $Q \mapsto \e(Q|Q)$ belongs to an open set on which
the function is increasing. Then, by its continuity on~${\mathbb R}$,
the function $Q \mapsto \e (Q|Q)$ has at most one zero. But $\e (0|0)
= \e_0 (0) = h - h_c$ and $\lim_{Q \rightarrow \infty} \e (Q|Q) =
\lim_{\la \rightarrow \infty} \e_\infty (\la) = \frac{h}{2 (1 - \g/\p)} > 0$,
implying that $Q \mapsto \e (Q|Q)$ has a unique positive zero $Q_F$ if
and only if $0 < h < h_c$.

(iv) The bounds $Q_F < Q_0$ and $Q_F > Q_u$, if $Q_u > 0$ exists,
follow from (\ref{epslowup}) and the monotonicity of $\e_0$ and
$\e_u$.

(v) The smoothness of $h \mapsto Q_F$ is consequence of the
implicit function theorem. $\frac{\rd Q_F}{\rd h}$ can be directly
calculated by implicit differentiation and the use of (\ref{epsprime}),
(\ref{totalqderivative}).
\begin{equation}
     \frac{\rd Q_F}{\rd h} = - \frac{Z(Q_F|Q_F)}{\e' (Q_F)} < 0 \epc
\end{equation}
since $Z(Q_F|Q_F) > 0$ and $\e' (Q_F) > 0$ (the latter follows from
(\ref{epsprime}), for the former one has to consider the resolvent
form of the integral equation for $Z(\la|Q)$). The limits in (v)
follow from (\ref{qfbounds}).

\subsection{Proof of Theorem \ref{th:main}}
Recall that we denote $\e = \e (\cdot|Q_F)$. Throughout this proof
we shall frequently use the notation $\la = x + \i y$ with
$x, y \in {\mathbb R}$.
\subsubsection*{Proof of (i)} Since the integral equation for $\e$ is linear, we have
\begin{align}
     \Re \e(x + \i y) & = \Re \e_0 (x + \i y)
        - \int_{- Q_F}^{Q_F} \rd \m \: \Re \bigl(K(x - \m + \i y|\g)\bigr) \e(\m)
	\notag \\[1ex] & =
	h - 2 \p J \sin(\g) \bigl(K(x|\g/2 - y) + K(x|\g/2 + y)\bigr)
	\notag \\ & \mspace{72.mu}
        - \int_{- Q_F}^{Q_F} \rd \m \:
	  \2 \bigl(K(x - \m|\g - y) + K(x - \m|\g +y)\bigr) \e(\m) \epp
\end{align}
Here we have used (\ref{rekkk}) in the second equation. The
expression on the right hand side is obviously even in $y$.
Its evenness in $x$ follows, since $\e (\m)$ is even for
$\m \in {\mathbb R}$ and since $K(\la|\g)$ is an even function
of $\la$.

\subsubsection*{Proof of (ii)}
The proof of (ii) relies on the fact that, provided one
replaces the functions in (\ref{rqr}), (\ref{rqrqlower}) by
their real parts, Lemma~\ref{lem:propsrq} can be extended
for $\la$ in the strip $|\Im \la| < \g/2$, which is
essentially due to the fact that (\ref{rconv}) holds in
that strip. We start with the elementary formulae
\begin{subequations}
\label{reoneoverch}
\begin{align} \label{rechinv}
     \Re \biggl(\frac{1}{\ch(\la \p/\g)}\biggr) & =
        \frac{\ch(x \p/\g) \cos(y \p/\g)}{\sh^2(x \p/\g) + \cos^2(y \p/\g)} \epc \\[1ex]
     \6_x \Re \biggl(\frac{1}{\ch(\la \p/\g)}\biggr) & = - \frac{\p}{\g}
        \frac{\sh(x \p/\g) \cos(y \p/\g) \bigl(\ch^2(x \p/\g) + \sin^2(y \p/\g)\bigr)}
	     {\bigl(\sh^2(x \p/\g) + \cos^2(y \p/\g)\bigr)^2} \epc \label{derrechinv}
\end{align}
\end{subequations}
which show that $\Re \bigl(1/\ch(\la \p/\g)\bigr)$ as a function of
$x = \Re \la$ is even and positive on ${\mathbb R}$ and monotonically
decreasing on ${\mathbb R}_+$, if $y = \Im \la \in (-\g/2, \g/2)$.

Taking the real part of (\ref{rconv}) we conclude with (\ref{rechinv})
that $\Re R(\la|\g) > 0$ for all $x \in {\mathbb R}$, if $y \in
(-\g/2, \g/2)$. Similarly, taking the real part of (\ref{rprimereal}),
using (\ref{reoneoverch}) and the fact that $K'(\la|\g') < 0$ for
$\la \in {\mathbb R}_+$, we conclude that $\6_x \Re R(\la|\g) < 0$
for all $x \in {\mathbb R}_+$, if $y \in (-\g/2, \g/2)$.

Taking the real part of (\ref{resolvcompl}) and using that 
$\Re R(\la|\g) > 0$ we conclude that
\begin{equation} \label{rerqgreater}
     \Re R_Q (\la, \m) > \Re R(\la - \m|\g) > 0
\end{equation}
for all $x \in {\mathbb R}$, $\m \in {\mathbb R}$, if $y \in
(-\g/2, \g/2)$. Similarly, taking the real part of (\ref{diffrqinteqn})
and using that $\Re R(\la|\g)$ is even, positive and monotonically
decreasing for $x \in {\mathbb R}_+$ we obtain the inequality
\begin{equation} \label{rerqminusrq}
     \Re \bigl(R_Q (\la,\m) - R_Q (\la, -\m)\bigr) >
        \Re \bigl(R (\la - \m|\g) - R(\la + \m|\g)\bigr) > 0
\end{equation}
for all $x > 0$, $\m > 0$, if $y \in (-\g/2, \g/2)$.

Setting $Q = Q_F$ in (\ref{epsprime}) and taking the real
part implies that
\begin{equation} \label{reepsprime}
     \6_x \Re \e(\la) = \Re \e_\infty' (\la) +
        \int_{Q_F}^\infty \rd \m \:
	   \Re \bigl(R_{Q_F} (\la, \m) - R_{Q_F} (\la, -\m)\bigr)
	   \e_\infty' (\m) \epp
\end{equation}
Here $\Re \e_\infty' (\la) > 0$ due to (\ref{derrechinv}), and the
integral is positive as well, because of (\ref{rerqminusrq}) and
since $\e_\infty' (\m) > 0$ for all $\m \in {\mathbb R}_+$. Thus,
we have shown that $\Re \e (\la)$ is monotonically increasing as
a function of $x = \Re \la$ for all $x > 0$ and all $y \in (-\g/2, \g/2)$.

The facts that $\e$ is bounded on $[-Q_F,Q_F]$, that $\lim_{\la
\rightarrow + \infty} \Re K(\la - \m|\g) = 0$, uniformly for all
$\m \in [-Q_F,Q_F]$, and that $\e$ satisfies (\ref{fredint}) imply
that
\begin{equation}
     \lim_{x \rightarrow \infty} \Re \e (\la) =
     \lim_{x \rightarrow \infty} \Re \e_0 (\la) = h \epp
\end{equation}
In order to understand the behaviour of $\Re \e(\la)$ at $x = 0$,
we consider the second derivative. Starting from the resolvent
form of the integral equation for $\e$ we obtain
\begin{equation} \label{reepspp}
     \6_x^2 \Re \e(\la)\Bigr|_{\la = \i y} = \Re \e_\infty'' (\i y) -
        \int_{Q_F}^\infty \rd \m \:
	   \Re \bigl(R' (\m + \i y|\g) + R' (\m - \i y|\g)\bigr)
	   \e' (\m) \epp
\end{equation}
Here the first term under the integral is negative (as $\6_x
\Re R(\la|\g) < 0$ for $x \in {\mathbb R}_+$ and $y \in (-\g/2,
\g/2)$ (see below (\ref{reoneoverch}))) and the second term, $\e' (\m)$,
is positive. We further have the explicit result
\begin{equation}
     \e_\infty'' (\i y) = 2J \sin(\g) \biggl(\frac{\p}{\g}\biggr)^3
        \frac{1 + \sin^2(y\p/\g)}{\cos^3(y\p/\g)} > 0
\end{equation}
for all $y \in (-\g/2,\g/2)$. Thus, altogether $\6_x^2 \Re
\e(\la)\bigr|_{x = 0} > 0$ if $y \in (-\g/2,\g/2)$. Since
$\Re \e(\la)$ is harmonic, this ensures that $\6_y^2 \Re \e(\i y) < 0$.
But $\e' (0) = 0$, since $\e$ is even, and therefore
$\6_y \Re \e(\i y)\bigr|_{y = 0} = 0$. It follows that
$\6_y \Re \e(\i y) < 0$ on $(0,\g/2)$. Then, since $\e$ is even,
$y \mapsto \Re \e(\i y)$ has a unique maximum at $y = 0$ on
$(-\g/2,\g/2)$, and $\Re \e(\i y) < \e (0) < 0$ for all
$y \in (-\g/2,\g/2)$.

It follows that $x \mapsto \Re \e(x + \i y)$ has a unique positive
zero for every $y \in (0,\g/2)$. This defines a function $(0,\g/2)
\rightarrow {\mathbb R}_+$, $y \mapsto x(y)$ which is smooth due to
the implicit function theorem.

\subsubsection*{Proof of (iii)}
$x(0) = Q_F$ by definition of the Fermi rapidity. The
behaviour of the curve close to the pole of $\e$ at $\i \g/2$
follows from a perturbative analysis of the integral equation
for $\e$ in its resolvent form (\ref{fredintcompl}).

\subsubsection*{Proof of (iv)}
The lower bound follows, since
\begin{equation}
     \Re \e(\la) = \Re \e_0 (\la)
         - \int_{- Q_F}^{Q_F} \rd \m \:
	   \Re \bigl(R_{Q_F} (\la, \m)\bigr) \e_0 (\m) \epc
\end{equation}
where $\e_0 (\m) < 0$ for $\m \in [-Q_F,Q_F]$ due to (\ref{qfbounds})
and where $\Re \bigl(R_Q (\la, \m)\bigr) > 0$ according to
(\ref{rerqgreater}). For the upper bound we set $Q = Q_F$ in
(\ref{epszrho}) and take the real part,
\begin{equation} \label{reepszrho}
     \Re \e(\la) = h \Re Z(\la|Q_F) - 4 \p J \sin(\g) \Re \r (\la|Q_F) \epp
\end{equation}
Here $\Re Z(\la|Q_F) < 1$ which follows from the first equation
in (\ref{zresform}) and from (\ref{rerqgreater}), and $\Re \r(\la|Q_F)
> \Re \r_\infty (\la)$ which is a consequence of (\ref{rhoresform})
and (\ref{rerqgreater}).

\subsubsection*{Proof of (\ref{reeps4})}
The fact that $\Re \e (\la) > 0$ for all $\la \in S_\g (Q_F)$
with $|\Im \la| > \g/2$ follows from the estimates (\ref{reepsgreater})
which we shall now show one by one.

We start with (\ref{reeps4}) and assume for a while that $\frac{\p}{2}
- \2 \bigl(\frac{\p}{2} - \g\bigr) < y < \frac{\p}{2}$. In a first
step we derive an appropriate integral representation of the dressed
energy in this strip. For this purpose we start from the defining
integral equation, (\ref{fredint}) with $f_0 = \e_0$ and $Q = Q_F$,
and deform the contour as sketched in Figure~\ref{fig:deformed_contour}.
\begin{figure}
\begin{center}
\includegraphics[width=.92\textwidth]{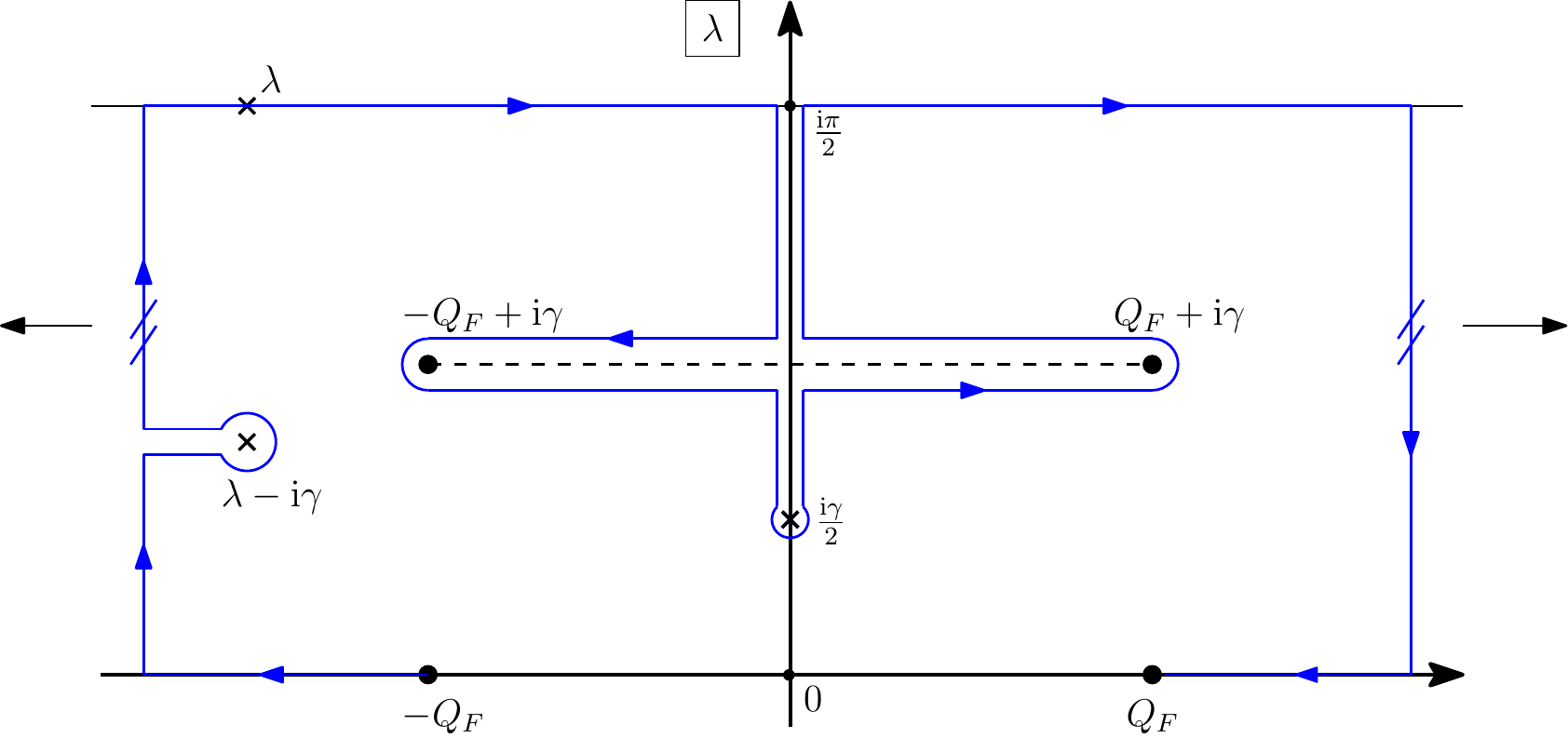}
\end{center}
\caption{\label{fig:deformed_contour} We deform the original integration
contour, which is a straight line from $-Q_F$ to $Q_F$ to the sketched
contour and move the left and the right parts to minus and plus infinity.
}
\end{figure}
Directly from the defining integral equation we can read off the
following properties of the dressed energy in the strip $0 < \Im \la <
\frac{\p}{2}$.
\renewcommand{\labelenumi}{(\alph{enumi})}
\begin{enumerate}
\item
$\e$ has a simple pole at $\la = \i \frac{\g}{2}$ with residue
\begin{equation}
     \res_{\la = \i \frac{\g}{2}} \e (\la) = 2 \i J \sin (\g) \epp
\end{equation}
\item
$\e$ has a jump discontinuity across the cut at $[-Q_F,Q_F] + \i \g$,
where
\begin{equation}
     \e_+ (\la) - \e_- (\la) = \e(\la - \i \g) \epp
\end{equation}
\item
\begin{equation} \label{epsasy}
     \lim_{\Re \la \rightarrow \pm \infty} \e (\la) = h \epp
\end{equation}
\end{enumerate}
\renewcommand{\labelenumi}{(\roman{enumi})}
We now first of all choose $\la$ such that $\Im \la = \frac{\p}{2}$.
Evaluating the integral that occurs in the integral equation for
$\e$ along the original contour and along the deformed contour and
using the above properties and the properties of the kernel we
obtain the identity
\begin{align} \label{defcontourid}
     \int_{-Q_F}^{Q_F} & \rd \m \: K(\la - \m|\g) \e (\m) \notag \\
       & = - \e (\la - \i \g) - 4 \p J \sin(\g) K(\la - \i \g/2|\g)
           - \int_{-Q_F}^{Q_F} \rd \m \: K(\la - \m - \i \g|\g) \e(\m)
	     \notag \\ & \mspace{36.mu}
           + \int_{{\mathbb R} + \i \frac{\p}{2}} \rd \m \:
	      K(\la - \m|\g) \e(\m)
           - \int_{{\mathbb R} \setminus [-Q_F,Q_F]} \rd \m \:
	      K(\la - \m|\g) \e(\m) \notag \\[1ex]
       & = - \e_0 (\la - \i \g) - 4 \p J \sin(\g) K(\la - \i \g/2|\g)
	     \notag \\ & \mspace{36.mu}
           + \int_{{\mathbb R} + \i \frac{\p}{2}} \rd \m \:
	      K(\la - \m|\g) \e(\m)
           - \int_{{\mathbb R} \setminus [-Q_F,Q_F]} \rd \m \:
	      K(\la - \m|\g) \e(\m) \epp
\end{align}
We insert this into the defining integral equation for $\e$ and
combine the explicit terms on the right hand side of
(\ref{defcontourid}) with the driving term. It follows that
\begin{equation} \label{ompreform}
     \e (\la) = 2h 
           - \int_{{\mathbb R} + \i \frac{\p}{2}} \rd \m \:
	      K(\la - \m|\g) \e(\m)
           + \int_{{\mathbb R} \setminus [-Q_F,Q_F]} \rd \m \:
	      K(\la - \m|\g) \e(\m) \epp
\end{equation}
We set $\la = z + \i \frac{\p}{2}$ and
\begin{equation}
     \om (z) = \e(z + \i \p/2) \epp
\end{equation}
Then, using the $\i \p$-periodicity of the kernel,
(\ref{ompreform}) turns into
\begin{equation}
     \om (z) = 2h 
        - \int_{{\mathbb R} \setminus [-Q_F,Q_F]} \rd w \:
	     K(z - w|\p/2 - \g) \e(w)
	- \int_{-\infty}^\infty \rd w \: K(z - w|\g) \om (w) \epp
\end{equation}

This equation can be solved for $\om$ by employing Fourier
transformation and the convolution theorem. For $\la \in
{\mathbb R}$ let
\begin{equation}
     \e_c (\la) =
        \begin{cases}
	   \e (\la) & \text{if $\la \in {\mathbb R} \setminus [-Q_F,Q_F]$} \\
	   0 & \text{else.}
        \end{cases}
\end{equation}
Then
\begin{equation}
     {\cal F} [\om] (k) =
        \frac{2 \p h \, \de(k)}{1 - \frac{\g}{\p}}
	- {\cal F} [D] (k) {\cal F} [\e_c] (k) \epc
\end{equation}
where
\begin{equation}
     {\cal F} [D] (k) = \frac{\sh\bigl(\frac{\g k}{2}\bigr)}
                             {\sh\bigl((\p - \g)\frac{k}{2}\bigr)} \epp
\end{equation}
It follows that
\begin{equation}
     D(z) = \frac{K \Bigl(\frac{z}{1 - \frac{\g}{\p}}\Big|\frac{\p}{2} - \g'\Bigr)}
                 {1 - \frac{\g}{\p}}
\end{equation}
with $\g'$ as defined in (\ref{defgprime}). Hence, $\om$ has the
representation
\begin{equation} \label{omrep}
     \om(z) = \frac{h}{1 - \frac{\g}{\p}} -
              \frac{1}{1 - \frac{\g}{\p}}
	      \int_{{\mathbb R} \setminus [-Q_F,Q_F]} \rd w \:
	         K \biggl(\frac{z - w}{1 - \frac{\g}{\p}}\bigg|\frac{\p}{2} - \g'\biggr)
		 \e(w) \epp
\end{equation}
Recall that $\g \mapsto \g'$ is a monotonically increasing bijection
of the interval $(0,\p/2)$. Hence, $\g \mapsto \frac{\p}{2} - \g'$
is a monotonically decreasing function that maps $(0,\p/2)$ onto
itself. Further notice that the kernel in (\ref{omrep}) as a function
of $z - w$ has simple poles at $\pm \i \bigl(\frac{\p}{2} - \g\bigr)
\mod \i(\p - \g)$.

We shall use (\ref{omrep}) to establish the lower bounds (\ref{reeps4})
and (\ref{reeps3}). Let us begin with (\ref{reeps4}). Since $K$ is
harder to estimate than $R$ we use the integral equation (\ref{intr})
in order to replace the kernel function $K$ on the right hand side of
(\ref{omrep}). Setting $x = \Re z = \Re \la$, $b = \Im z = \Im \la - \p/2
= y - \p/2$ and taking the real part and the $x$-derivative of (\ref{omrep})
we arrive after a few elementary manipulations at
\begin{align} \label{deromrep}
     \6_x \Re \om (z) = &
        - \frac{1}{1 - \frac{\g}{\p}}
	  \int_{Q_F}^\infty \rd w \: \Re \biggl\{
	  R \biggl(\frac{z - w}{1 - \frac{\g}{\p}}\bigg|\frac{\p}{2} - \g'\biggr)
	  - R \biggl(\frac{z + w}{1 - \frac{\g}{\p}}\bigg|\frac{\p}{2} - \g'\biggr)
	  \biggr\} \e' (w) \notag \\[1ex] & \mspace{-90.mu} -
          \frac{1}{\bigl(1 - \frac{\g}{\p}\bigr)^2}
	  \int_{0}^\infty \rd w \: \Re \biggl\{
	  R \biggl(\frac{z - w}{1 - \frac{\g}{\p}}\bigg|\frac{\p}{2} - \g'\biggr)
	  - R \biggl(\frac{z + w}{1 - \frac{\g}{\p}}\bigg|\frac{\p}{2} - \g'\biggr)
	  \biggr\} \notag \\ & \mspace{-18.mu} \times
	  \int_{Q_F}^\infty \rd u \: \Re \biggl\{
	  K \biggl(\frac{w - u}{1 - \frac{\g}{\p}}\bigg|\frac{\p}{2} - \g'\biggr)
	  - K \biggl(\frac{w + u}{1 - \frac{\g}{\p}}\bigg|\frac{\p}{2} - \g'\biggr)
	  \biggr\} \e' (u) \epp
\end{align}
Now (\ref{rerqminusrq}) implies that
\begin{equation}
   \Re \biggl\{
      R \biggl(\frac{z - w}{1 - \frac{\g}{\p}}\bigg|\frac{\p}{2} - \g'\biggr)
      - R \biggl(\frac{z + w}{1 - \frac{\g}{\p}}\bigg|\frac{\p}{2} - \g'\biggr)
      \biggr\} > 0
\end{equation}
for all $x = \Re z > 0$, $w > 0$ if
\begin{equation} \label{yrangeom4}
     |b| < \biggl(1 - \frac{\g}{\p}\biggr) \2 \biggl(\frac{\p}{2} - \g'\biggr)
         = \2 \biggl(\frac{\p}{2} - \g \biggr) \epp
\end{equation}
Since $\e' (w) > 0$ for $w > 0$ and since the same is true for
the difference of the kernel functions $K$ in the second
integral on the right hand side of (\ref{deromrep}) we conclude that
$\6_x \Re \om (z) < 0$, meaning that $x \mapsto \Re \om (x + \i b)$
is monotonically decreasing on ${\mathbb R}_+$ if $b$ satisfies
(\ref{yrangeom4}). Thus, $x \mapsto \Re \e (x + \i y)$ is monotonically
decreasing on ${\mathbb R}_+$ for $\frac{\p}{2} -
\2 \bigl(\frac{\p}{2} - \g\bigr) < y < \frac{\p}{2}$. Combining this
knowledge with the asymptotic formula (\ref{epsasy}) we have established
(\ref{reeps4}).

\subsubsection*{Proof of (\ref{reeps3})}
We proceed with (\ref{reeps3}). For the proof we consider (\ref{omrep})
with
\begin{equation} \label{rangeom3}
     \g - \frac{\p}{2} < b = \Im z < \2 \biggl(\g - \frac{\p}{2}\biggr) < 0 \epp
\end{equation}
Equation (\ref{rekkk}) implies that
\begin{equation}
   \Re K \biggl(\frac{z - w}{1 - \frac{\g}{\p}}\bigg|\frac{\p}{2} - \g'\biggr)
      = \2 \biggl\{
        K \biggl(\frac{x - w}{1 - \frac{\g}{\p}}\bigg| \g_+ \biggr)
        + K \biggl(\frac{x - w}{1 - \frac{\g}{\p}}\bigg| \g_- \biggr)\biggr\} \epc
\end{equation}
where $x = \Re z$ and
\begin{equation}
     \g_\pm = \frac{\p}{2} - \g' \pm \frac{b}{1 - \frac{\g}{\p}} \epp
\end{equation}
The inequality (\ref{rangeom3}) implies that
\begin{equation}
     0 < \g_+ < \frac{\p}{4} \epc \qd \g_+ < \g_- < \p \epp
\end{equation}
Hence, we have to distinguish two cases, $\g_- < \frac{\p}{2}$ or
$\g_- > \frac{\p}{2}$.

Taking the real part and the $x$-derivative of (\ref{omrep}) we see that
\begin{equation}
     \6_x \Re \om (z) =
        - \frac{1}{1 - \frac{\g}{\p}}
	\int_{Q_F}^\infty \rd w \: \sum_{\s = \pm}
           \biggl\{
           K \biggl(\frac{x - w}{1 - \frac{\g}{\p}}\bigg| \g_\s \biggr)
           - K \biggl(\frac{x + w}{1 - \frac{\g}{\p}}\bigg| \g_\s \biggr)\biggr\}
	   \frac{\e' (w)}{2} \epp
\end{equation}
If $\g_- < \frac{\p}{2}$, then the summands for $\s = +$ and for $\s = -$
are both positive. Since moreover $\e' (w) > 0$ for $w > 0$, we conclude
that $x \mapsto \Re \om(x + \i b)$ is monotonically decreasing on
${\mathbb R}_+$. Thus, (\ref{epsasy}) implies that $\Re \om(x + \i b) > h$
in this case.

On the other hand, if $\g_- > \frac{\p}{2}$, then $K(x/(1 - \g/\p)|\g_-) < 0$
for all $x \in {\mathbb R}$ and
\begin{equation}
     \Re \om (z) >
        \frac{h}{1 - \frac{\g}{\p}}
	   - \frac{1}{1 - \frac{\g}{\p}}
	     \int_{{\mathbb R} \setminus [-Q_F,Q_F]} \rd w \:
	     K \biggl(\frac{x - w}{1 - \frac{\g}{\p}}\bigg|\g_+ \biggr)
	     \frac{\e(w)}{2} = F(x) \epp
\end{equation}
As above we can conclude that $F'(x) < 0$ for all $x > 0$. For
the asymptotic behaviour of this function we obtain
\begin{multline}
     \lim_{x \rightarrow \infty} F(x) = 
        \frac{h}{1 - \frac{\g}{\p}}
	   - \frac{h}{2} \int_{-\infty}^\infty \rd w \: K(w|\g_+) \\ =
        \frac{h}{1 - \frac{\g}{\p}}
	- \frac{h}{\p} \biggl(\g' - \frac{b}{1 - \frac{\g}{\p}}\biggr)
	> \frac{h}{2} \frac{1 + \frac{\g}{\p}}{1 - \frac{\g}{\p}} > \frac{h}{2} \epp
\end{multline}
Thus, $\Re \om (z) > \frac{h}{2}$, whenever (\ref{rangeom3}) is
satisfied, or $\Re \e (x + \i y) > \frac{h}{2}$ for all $x \in
{\mathbb R}$ if $\g < y < \frac{\p}{2} - \2 \bigl(\frac{\p}{2}
- \g\bigr)$, which is (\ref{reeps3}).

\subsubsection*{Proof of (\ref{reeps2})}
It remains to prove (\ref{reeps2}). For this purpose we start with
(\ref{fredintcompl}) for the dressed energy,
\begin{equation} \label{epsintcompl}
     \e (\la) = \e_\infty (\la) +
        \int_{{\mathbb R} \setminus [-Q_F,Q_F]} \rd \m \:
	R(\la - \m|\g) \e(\m) \epp
\end{equation}
First of all equation (\ref{rechinv}) with $\frac{\g}{2} < y
< \g$ implies that
\begin{equation} \label{rinfest}
     \Re \e_\infty (\la) > \frac{h}{2 \bigl(1 - \frac{\g}{\p}\bigr)} \epp
\end{equation}
In order to estimate the real part of the integral in
(\ref{epsintcompl}), we define the function
\begin{equation} \label{rconvext}
     R_I (\la|\g) =
        \frac{\p}{2 \g(\p - \g)} \int_{- \infty}^\infty \rd \m \:
	\frac{K \Bigl(\frac{\m}{1 - \frac{\g}{\p}}\Big|\frac{\g/2}{1 - \frac{\g}{\p}}\Bigr)}
	     {\ch \bigl((\la - \m) \frac{\p}{\g}\bigr)}
\end{equation}
which, for $\frac{\g}{2} < y = \Im \la < \g$ \emph{differs} from
the resolvent $R(\la|\g)$. Analytically continuing (\ref{rconv})
we rather see that
\begin{equation}
     R(\la|\g) = R_I (\la|\g)
        + \frac{1}{1 - \frac{\g}{\p}}
	   K \biggl(\frac{\la - \i \g/2}{1 - \frac{\g}{\p}}
	     \bigg|\frac{\g/2}{1 - \frac{\g}{\p}}\biggr) \epp
\end{equation}
Hence,
\begin{multline} \label{intseps2}
     \int_{{\mathbb R} \setminus [-Q_F,Q_F]} \rd \m \:
        \Re \bigl(R(\la - \m|\g)\bigr) \e(\m) =
        \int_{{\mathbb R} \setminus [-Q_F,Q_F]} \rd \m \:
           \Re \bigl(R_I (\la - \m|\g)\bigr) \e(\m) \\[1ex] \mspace{-1.mu}
        + \frac{1}{2\bigl(1 - \frac{\g}{\p}\bigr)}
          \int_{{\mathbb R} \setminus [-Q_F,Q_F]} \rd \m \:
	   \biggl\{
	   K \biggl(\frac{x - \m}{1 - \frac{\g}{\p}}
	     \bigg|\frac{\g - y}{1 - \frac{\g}{\p}}\biggr)
	   + K \biggl(\frac{x - \m}{1 - \frac{\g}{\p}}
	       \bigg|\frac{y}{1 - \frac{\g}{\p}}\biggr)\biggr\} \e (\m) \epp
\end{multline}

Now $\Re R_I (\la - \m) < 0$ for all $y = \Im \la \in (\g/2, \g)$,
$x = \Re \la, \m \in {\mathbb R}$, because of (\ref{rechinv}),
(\ref{rconvext}), while $0 < \e (\m) < h$ for all $\m \in {\mathbb R}
\setminus [- Q_F,Q_F]$. Thus,
\begin{align} \label{riintest}
     & \int_{{\mathbb R} \setminus [-Q_F,Q_F]} \rd \m \:
        \Re \bigl(R_I (\la - \m|\g)\bigr) \e(\m)
	 \notag \\[1ex] & \mspace{18.mu}
     > h \int_{{\mathbb R} \setminus [-Q_F,Q_F]} \rd \m \:
         \Re \bigl(R_I (\la - \m|\g)\bigr)
     > h \int_{- \infty}^\infty \rd \m \:
         \Re \bigl(R_I (\m - \i y|\g)\bigr)
	 \notag \\[1ex] & \mspace{18.mu}
     = \frac{h}{2 \g \bigl(1 - \frac{\g}{\p}\bigr)}
       \int_{- \infty}^\infty \rd \n \:
          K \Bigl(\frac{\n}{1 - \frac{\g}{\p}}\Big|\frac{\g/2}{1 - \frac{\g}{\p}}\Bigr)
          \Re \int_{- \infty}^\infty \rd \m \:
              \frac{1}{\ch \bigl((\m - \n - \i y) \frac{\p}{\g}\bigr)}
	 \notag \\[1ex] & \mspace{18.mu}
     = - \frac{h}{2} \frac{1 - \frac{2 \g}{\p}}{1 - \frac{\g}{\p}} \epp
\end{align}
Here we have used the residue theorem and equation (\ref{kfourier})
to evaluate the integrals in the last equation.

In order to estimate the second integral in (\ref{intseps2}) we
note that $\frac{\g}{2} < y < \g$ implies that
\begin{equation}
     0 < \frac{\g - y}{1 - \frac{\g}{\p}} < \g' < \frac{y}{1 - \frac{\g}{\p}} < 2 \g' \epp
\end{equation}
Recalling that $0 < \g' < \frac{\p}{2}$ we see that the
contribution of the first kernel to the second integral in
(\ref{intseps2}) is always positive. Hence,
\begin{align} \label{kkintest}
     & \frac{1}{2\bigl(1 - \frac{\g}{\p}\bigr)}
        \int_{{\mathbb R} \setminus [-Q_F,Q_F]} \rd \m \:
	   \biggl\{
	   K \biggl(\frac{x - \m}{1 - \frac{\g}{\p}}
	     \bigg|\frac{\g - y}{1 - \frac{\g}{\p}}\biggr)
	   + K \biggl(\frac{x - \m}{1 - \frac{\g}{\p}}
	       \bigg|\frac{y}{1 - \frac{\g}{\p}}\biggr)\biggr\} \e (\m)
	\notag \\[1ex] & \mspace{18.mu}
        > \frac{1}{2\bigl(1 - \frac{\g}{\p}\bigr)}
          \int_{{\mathbb R} \setminus [-Q_F,Q_F]} \rd \m \:
	     K \biggl(\frac{x - \m}{1 - \frac{\g}{\p}}
	       \bigg|\frac{y}{1 - \frac{\g}{\p}}\biggr) \e(\m)
	\notag \\[1ex] & \mspace{18.mu} >
	\begin{cases}
	   0 & \text{if $\frac{y}{1 - \frac{\g}{\p}} < \frac{\p}{2}$} \\[1ex]
	   \frac{h}{2} \frac{1 - \frac{3 \g}{\p}}{1 - \frac{\g}{\p}}
	   & \text{if $\frac{y}{1 - \frac{\g}{\p}} > \frac{\p}{2}$.}
	\end{cases}
\end{align}
For the second case in the last line we have estimated the
integral by replacing $\e (\m)$ by $h$ and the range
of integration by the real axis as in (\ref{riintest}).
Combining the estimates (\ref{rinfest}), (\ref{riintest}) and
(\ref{kkintest}) we arrive at the conclusion that
\begin{equation}
     \Re \e (\la) >
        \begin{cases}
	   \frac{h \g}{\p - \g} & \text{if $y < \frac{\p}{2} - \frac{\g}{2}$} \\[1ex]
	   \frac{h}{2} & \text{if $y > \frac{\p}{2} - \frac{\g}{2}$}
	\end{cases}
\end{equation}
which entails the claim (\ref{reeps2}).

\subsubsection*{Proof of (vi) and (vii)}
(vi) is a consequence of an analogous property of the kernel
function $K(\cdot|\g)$. In order to show (vii) we introduce the
notation $u = \Re \e$, $v = \Im \e$, $x = \Re \la$, $y = \Im \la$
and consider $\la$ as $\la(y) = x(y) + \i y$. The curve $u(\la) = 0$
is located in the strip $|y| < \g/2$. In this strip $u_x > 0$
according to (ii). By implicit differentiation
\begin{equation}
     \frac{\rd x}{\rd y} = - \frac{u_y}{u_x} \epp
\end{equation}
It follows that
\begin{equation}
     \frac{\rd v}{\rd y} = v_x \frac{\rd x}{\rd y} + v_y
        = - \frac{v_x u_y}{u_x} + v_y = \frac{v_x^2}{u_x} + u_x > 0 \epp
\end{equation}
Here we have used the Cauchy-Riemann equations in the third equation.
Equation (\ref{leadim}) is obtained by inserting (\ref{leadx}) into
the leading term of the Laurent expansion of $\e$ obtained, for
instance, from the integral equation (\ref{fredint}) with
$f_0 = \e_0$, $Q = Q_F$.

\section{Conclusions}
We have studied some of the properties of the dressed energy $\e$
of the XXZ chain in the complex plane. In particular, we have
obtained a clear picture of where $\Re \e$ is positive and where
it is negative. Both regions are separated by the smooth simple
and closed curve $\Re \e = 0$, which is reflection symmetric with
respect to real and imaginary axis, which goes through two Fermi
points $\pm Q_F$ on the real axis and through the points $\pm \i \g/2$.
Moreover, this curve is entirely located inside the strip $\Im \la
\le \g/2$. In the right half plane $\Im \e$ is monotonically
increasing in the direction of increasing imaginary part and is
diverging at $\pm \i \g/2$. These properties are essential for
a future rigorous characterization of the auxiliary functions
that determine the sets of Bethe roots and the eigenvalues of the
quantum transfer matrix of the model in the zero-temperature limit.

{\bf Acknowledgments.}
The authors would like to thank Junji Suzuki for a critical reading
of the manuscript and for his valuable comments. SF and FG
acknowledge financial support by the DFG in the framework
of the research unit FOR 2316. The work of KKK is supported by the
CNRS and by the ‘Projet international de coop\'eration scientifique
No. PICS07877’: \textit{Fonctions de corr\'elations dynamiques
dans la cha{\^\nodoti}ne XXZ \`a temp\'erature finie}, Allemagne,
2018-2020.

%
%
%
%


\begin{thebibliography}{10}

\bibitem{BGKS21a}
C.~Babenko, F.~G\"ohmann, K.~K. Kozlowski, and J.~Suzuki, \emph{A thermal form
  factor series for the longitudinal two-point function of the
  {H}eisenberg-{I}sing chain in the antiferromagnetic massive regime}, J. Math.
  Phys. \textbf{62} (2021), 041901.

\bibitem{Bethe31}
H.~Bethe, \emph{Zur {T}heorie der {M}etalle. {I}. {E}igenwerte und
  {E}igenfunktionen der line\-aren {A}tomkette}, Z. Phys. \textbf{71} (1931),
  205--226.

\bibitem{DGK13a}
M.~Dugave, F.~G\"ohmann, and K.~K. Kozlowski, \emph{Thermal form factors of the
  {XXZ} chain and the large-distance asymptotics of its temperature dependent
  correlation functions}, J. Stat. Mech.: Theor. Exp. (2013), P07010.

\bibitem{DGK14b}
\bysame, \emph{Functions characterizing the ground state of the {XXZ}
  spin-$1/2$ chain in the thermodynamic limit}, SIGMA \textbf{10} (2014), 043.

\bibitem{DGK14a}
\bysame, \emph{Low-temperature large-distance asymptotics of the transversal
  two-point functions of the {XXZ} chain}, J. Stat. Mech.: Theor. Exp. (2014),
  P04012.

\bibitem{DGKS15b}
M.~Dugave, F.~G\"ohmann, K.~K. Kozlowski, and J.~Suzuki, \emph{Low-temperature
  spectrum of correlation lengths of the {XXZ} chain in the antiferromagnetic
  massive regime}, J. Phys. A \textbf{48} (2015), 334001.

\bibitem{DGKS16b}
\bysame, \emph{Thermal form factor approach to the ground-state correlation
  functions of the {XXZ} chain in the antiferromagnetic massive regime}, J.
  Phys. A \textbf{49} (2016), 394001.

\bibitem{GGK00}
I.~Gohberg, S.~Goldberg, and N.~Krupnik, \emph{Traces and determinants of
  linear operators}, Operator Theory -- Advances and Applications, vol. 116,
  Birkh\"auser Verlag, Basel, 2000.

\bibitem{GKKKS17}
F.~G\"ohmann, M.~Karbach, A.~Kl\"umper, K.~K. Kozlowski, and J.~Suzuki,
  \emph{Thermal form-factor approach to dynamical correlation functions of
  integrable lattice models}, J. Stat. Mech.: Theor. Exp. (2017), 113106.

\bibitem{KKMST11b}
N.~Kitanine, K.~K. Kozlowski, J.~M. Maillet, N.~A. Slavnov, and V.~Terras,
  \emph{A form factor approach to the asymptotic behavior of correlation
  functions in critical models}, J. Stat. Mech.: Theor. Exp. (2011), P12010.

\bibitem{KMT99b}
N.~Kitanine, J.~M. Maillet, and V.~Terras, \emph{Correlation functions of the
  {XXZ} {H}eisenberg spin-$\frac{1}{2}$ chain in a magnetic field}, Nucl. Phys.
  B \textbf{567} (2000), 554.

\bibitem{Kluemper92}
A.~Kl\"umper, \emph{Free energy and correlation length of quantum chains
  related to restricted solid-on-solid lattice models}, Ann.\ Physik \textbf{1}
  (1992), 540.

\bibitem{Orbach58}
R.~Orbach, \emph{Linear antiferromagnetic chain with anisotropic coupling},
  Phys. Rev. \textbf{112} (1958), 309.

\bibitem{STF79}
E.~K. Sklyanin, L.~A. Takhtadzhyan, and L.~D. Faddeev, \emph{Quantum inverse
  problem method. {I}.}, Theor. Math. Phys. \textbf{40} (1979), 688.

\bibitem{SAW90}
J.~Suzuki, Y.~Akutsu, and M.~Wadati, \emph{A new approach to quantum spin
  chains at finite temperature}, J. Phys. Soc. Jpn. \textbf{59} (1990), 2667.

\bibitem{Suzuki85}
M.~Suzuki, \emph{Transfer-matrix method and {Monte Carlo} simulation in quantum
  spin systems}, Phys. Rev. B \textbf{31} (1985), 2957.

\bibitem{TaSu72}
M.~Takahashi and M.~Suzuki, \emph{One-dimensional anisotropic {H}eisenberg
  model at finite temperatures}, Prog. Theor. Phys. \textbf{48} (1972), 2187.

\bibitem{Walker59}
L.~R. Walker, \emph{Antiferromagnetic linear chain}, Phys. Rev. \textbf{116}
  (1959), 1089.

\bibitem{YaYa66a}
C.~N. Yang and C.~P. Yang, \emph{Ground-state energy of a {H}eisenberg-{I}sing
  lattice}, Phys. Rev. \textbf{147} (1966), 303.

\bibitem{YaYa66b}
\bysame, \emph{One-dimensional chain of anisotropic spin-spin interactions.
  {I}.\ {P}roof of {B}ethe's hypothesis for ground state in a finite system},
  Phys. Rev. \textbf{150} (1966), 321.

\bibitem{YaYa66c}
\bysame, \emph{One-dimensional chain of anisotropic spin-spin interactions.
  {II}.\ {P}roperties of the ground-state energy per lattice site for an
  infinite system}, Phys. Rev. \textbf{150} (1966), 327.

\bibitem{YaYa69}
\bysame, \emph{Thermodynamics of a one-dimensional system of {B}osons with
  repulsive delta-function interaction}, J. Math. Phys. \textbf{10} (1969),
  1115.

\end{thebibliography}

\providecommand{\bysame}{\leavevmode\hbox to3em{\hrulefill}\thinspace}
\providecommand{\MR}{\relax\ifhmode\unskip\space\fi MR }
\providecommand{\MRhref}[2]{%
  \href{http://www.ams.org/mathscinet-getitem?mr=#1}{#2}
}
\providecommand{\href}[2]{#2}

\end{document}